\documentclass{article}
\usepackage{graphicx,amsfonts,amsmath,mathrsfs,amssymb,amsthm,url,color,bbm,subfigure}
\usepackage{fancyhdr,indentfirst,bm,enumitem,natbib}
  \usepackage[colorlinks=true,citecolor=blue]{hyperref}
  \usepackage{booktabs}
         
 \def\oo{\infty}                   \def\d{\,\mathrm{d}}
  \def\lm{\lambda}                  
                    
    \def\dfrac{\displaystyle \frac}

   \newcommand{\E}{\mathbb{E}}
    \newcommand{\R}{\mathbb{R}}
     
      \renewcommand{\P}{\mathbb{P}}
       \newcommand{\cR}{\mathcal{R}}
       
      \newcommand{\VaR}{\operatorname{VaR}}
      \newcommand{\ES}{\operatorname{ES}}
\renewcommand{\(}{\left (}
 \renewcommand{\)}{\right )}
  \renewcommand{\[}{\left [}
   \renewcommand{\]}{\right ]}
\newtheorem{theorem}{Theorem}[section]
 
  \newtheorem{lemma}[theorem]{Lemma}
   \newtheorem{proposition}[theorem]{Proposition}
    \newtheorem{definition}{Definition}[section]
     \newtheorem{example}[theorem]{Example}

\numberwithin{equation}{section}
 \numberwithin{theorem}{section}

\renewcommand{\cite}{\citet}
 \allowdisplaybreaks
  
   \addtolength{\textheight}{.5\baselineskip}
\textheight 23cm
 \textwidth 16.5cm
  \topmargin -1.5cm
   \oddsidemargin 0cm
    \evensidemargin 0cm

\def\blue{\color{blue}}

\begin{document}

\baselineskip=18pt

\title{Norms Based on Generalized Expected-Shortfalls and Applications}

\author{ Shuyu Gong\qquad Taizhong Hu\qquad Zhenfeng Zou\footnote{Corresponding author.\newline
 \hbox{\quad\ } E-mail addresses: {\blue shuyu@mail.ustc.edu.cn} (S. Gong), {\blue thu@ustc.edu.cn} (T. Hu), {\blue zfzou@ustc.edu.cn} (Z. Zou) }  \\[10pt]
    Department of Statistics and Finance, School of Management,\\
        University of Science and Technology of China,\\
           Hefei, Anhui 230026, China
     }

\date{July, 2025}

\maketitle

\begin{abstract}

This paper proposes a novel class of generalized Expected-Shortfall (ES) norms constructed via distortion risk measures, establishing a unified analytical framework for risk quantification. The proposed norms extend conventional ES methodology by incorporating flexible distortion functions. Specifically, we develop the mathematical duality theory for generalized-ES norms to support portfolio optimization tasks, while demonstrating their practical utility through projection problem solutions. The generalized-ES norms are also applied to detect anomalies of financial time series data.
\medskip		

\noindent \textbf{JEL classification}:  D81, C61, G22
		
\noindent \textbf{Keywords}: Distortion function, Distortion risk measure, Dual norm, Projection, Anomaly detection

\end{abstract}

\section{Introduction}

\subsection{Scaled ES-norm}

The \emph{Value-at-Risk} (VaR) and \emph{Expected Shortfall} (ES) are two popular risk measures, which are widely used in banking and insurance. The ES is also termed \emph{Average Value-at-Risk} (AVaR), \emph{Tail Value-at-Risk} (TVaR) and \emph{Conditional Value-at-Risk} (CVaR) in different contexts. For a random variable $X$ with distribution function $F_X$, the VaR of $X$ at confidence level $\alpha$ is defined by
$$
   \VaR_\alpha(X):= F_X^{-1}(\alpha) = \inf\{x: F_X(x)\ge \alpha\},\quad \alpha\in (0,1],
$$
with $\VaR_0(X)=F_X^{-1}(0)={\rm essinf}(X)$ and $\VaR_1(X)=F_X^{-1}(1) ={\rm esssup}(X)$. For $X$ with finite mean, the ES of $X$ at confidence level $\alpha\in [0,1)$ is defined by
\begin{equation*}
   \ES_\alpha(X)=\frac {1}{1-\alpha} \int^1_\alpha {\rm VaR}_s(X) \d s
\end{equation*}
with $\ES_0(X)=\E [X]$ and $\ES_1(X)={\rm esssup}(X)$. It is known from \cite{RU02} that ES has the following representation:
\begin{equation}
 \label{eq-221030-2}
   \ES_\alpha(X)=\min_{t\in\R}\left \{ t+ \frac {1}{1-\alpha} \E [(X-t)_+]\right \},\quad \alpha\in [0,1),
\end{equation}
where $x_+=\max\{x, 0\}$ for $x\in\R$. If $X$ is a discrete random variable distributed on $\{x_1, \ldots, x_n\}$ such that $\P(X=x_i)=p_i$ with $\sum_{i=1}^n p_i=1$, from \eqref{eq-221030-2}, it follows that
\begin{equation}
 \label{eq-250401-1}
    \operatorname{ES}_\alpha(X)=\min_{t\in\R}\left\{t+\frac{1}{1-\alpha} \sum_{i=1}^n p_i (x_i-t)_+\right\}.
\end{equation}

\cite{PU14} introduced two families of scaled and non-scaled ES-norms on $\R^n$ based on the ES risk measure. The scaled ES-norm of a vector $\bm x =(x_1,\ldots,x_n)\in\R^n$ at level $\alpha\in [0,1]$ is denoted by  $\langle\!\langle\bm x\rangle\!\rangle_{\alpha}^S$, and the non-scaled ES-norm of $\bm x$ at level $\alpha\in [0,1]$ is defined to be  $\langle\!\langle\bm x\rangle\!\rangle_{\alpha}= n(1-\alpha) \langle\!\langle\bm x\rangle\!\rangle_{\alpha}^S$. For $\bm x\in\R^n$, $\langle\!\langle\bm x\rangle\!\rangle_{\alpha}^S$ is defined to be the value of $\ES_\alpha(|X|)$, where $X$ is uniformly distributed on $\{x_1, \ldots, x_n\}$. That is,
\begin{equation}
  \label{eq-250401-2}
 \begin{aligned}
     \langle\!\langle\bm x\rangle\!\rangle_\alpha^S & =\min_{t\in\R}\left\{ t+\frac{1}{n(1-\alpha)}
                 \sum_{i=1}^n \(|x_i|-t\)_+\right\},\quad \alpha\in [0,1), \\
     \langle\!\langle\bm x\rangle\!\rangle_1^S & = \max _{1\le i\le n} |x_i|.
\end{aligned}
\end{equation}
In the sequel, let us order the absolute values of components of a vector $\bm x \in \R^n$ as $|x|_{(1)} \le |x|_{(2)} \le \cdots \le |x|_{(n)}$. In \cite{PU14}, another equivalent definition of the scaled ES-norm is also given as follows: for $\bm x\in \R^n$ and $\alpha_j=j/n$,
\begin{equation}
   \label{eq-250401-4}
    \langle\!\langle\bm x\rangle\!\rangle_{\alpha_j}^S = \frac{1}{n-j}\sum_{i=j+1}^n  |x|_{(i)},\quad j=0, \ldots, n-1,
\end{equation}
and for $\alpha$ such that $\alpha_j<\alpha<\alpha_{j+1}$, $j=0,\ldots,n-2$,
\begin{equation}
   \label{eq-250401-5}
   \langle\!\langle \bm x \rangle\!\rangle_{\alpha}^S = \mu \langle\!\langle \bm x \rangle\!\rangle_{\alpha_j}^S + (1-\mu)\langle\!\langle \bm x \rangle\!\rangle_{\alpha_{j+1}}^S,
\end{equation}
where $\mu\in(0,1)$ such that
$$
    \frac{1}{1-\alpha} =\frac{\mu}{1-\alpha_j}+\frac{1-\mu}{1-\alpha_{j+1}}.
$$
For $\alpha$ such that $(n-1)/n <\alpha\le 1$, 
$
   \langle\!\langle\bm x \rangle\!\rangle_\alpha^S =|x|_{(n)}.
$
This definition is inspired in a more intuitive way since $\ES_\alpha$ is actually the tail average of the concerned distribution beyond its $\alpha$-quantile. Moreover, the scaled $L_1^S$-norm is defined to be
$$
    ||\bm x ||_1^S=\frac{1}{n}\sum_{i=1}^n |x_i|.
$$
Thus, the scaled $L_1^S$-norm and the well-known $L_\oo$-norm are limiting cases of the family of the scaled ES-norms $\langle\!\langle\cdot\rangle\!\rangle_{\alpha}^S$ as $\alpha$ goes to $0$ and $1$, respectively.

ES-norm can be efficiently used in optimization problem. \cite{PU14} proved that the ES-norm is equivalent to the D-norm introduced in \cite{BPS04} for robust optimization.

\subsection{Motivation and contributions}

Norms have evolved into a fundamental tool across diverse mathematical disciplines and their applications, spanning probability theory, statistical analysis, pattern recognition, clustering algorithms, signal processing, machine learning, and numerous other scientific domains (see, for example, \cite{DD16,ZF14,ZF15,KLBP20,MMC21}). The rich diversity of norm formulations stems from their wide-ranging applications, necessitating careful selection of context-appropriate norms.

$\ES_\alpha$ quantifies the average loss beyond a given quantile level $\alpha$, thereby capturing tail risks. However, traditional ES-norms assume uniform weighting of losses, limiting their ability to reflect complex risk profiles observed in real-world markets. To address this limitation, we introduce the generalized-ES norm, which incorporates a distortion function to weight different parts of the loss distribution. This framework enables a more flexible and nuanced risk assessment, capturing a variety of risk characteristics through diverse distortion functions.

Our work makes several contributions to both the theoretical and practical aspects of financial risk management. The main contributions of this paper are as follows. We establish necessary and sufficient conditions for the generalized-ES to constitute a proper norm (Proposition \ref{pr-250707}), and develop dual representations that facilitate computational implementation (Proposition \ref{pr-250510}). The generalized-ES norm demonstrates robust anomaly detection capabilities when applied to high-frequency financial time series characterized by heavy-tailed distributions (Subsection \ref{sect-5.2}).

\subsection{Outline of the paper}

The rest of this paper is organized as follows. In Section \ref{sect-2}, we introduce the scaled generalized-ES norm and provide its alternative representations. We also discuss the relationship between distortion functions and the convexity of unit disks, providing examples of commonly used distortion functions. Section \ref{sect-3} extends the framework to non-scaled cases, and discusses its properties. Section \ref{sect-4} develops the dual norm theory for the scaled generalized-ES norm, which is crucial for optimization applications. We establish the connection between the primal and dual representations, and their computational implications. Section \ref{sect-5} demonstrates practical applications through two case studies: (1) solving projection problems, and (2) detecting anomalies in financial time series data. These applications highlight the framework's versatility.

\section{Scaled generalized-ES norm}
\label{sect-2}

We first recall the class of \emph{distortion risk measures} that contains VaR and ES. A distortion risk measure is defined via the Choquet integral,
\begin{equation*}
  \rho_h(X)=\int^\oo_0 h(\P(X>x))\d x -\int^0_{-\oo} [1-h(\P(X>x))] \d x,
\end{equation*}
where $X$ is a random variable in the domain of $\rho_h$, that is, at least one of the two integrals is finite.
The function $h: [0,1]\to [0,1]$ is called a \emph{distortion function}, i.e., $h$ is increasing with $h(0)=0$ and $h(1)=1$. Throughout, ``increasing'' and ``decreasing'' are used to mean ``non-decreasing'' and ``non-increasing'', respectively. For a distortion function $g$, denote its dual distortion function by ${\bar g}(t)=1-g(1-t)$, and define $H_g[X]=\rho_{\bar g}(X)$. For more properties on distortion risk measures, see \cite{DKLT12} and \cite{WWW20}.

The generalized-ES, introduced by \cite{Pic24} and carefully studied by \cite{GZGH24}, is defined by replacing the expectation $\E [\cdot]$ in \eqref{eq-221030-2} by a distortion risk measure $H_g[\cdot]$, that is,
\begin{equation}
   \label{eq-2024-4}
    \cR_\alpha (X) = \inf_{t\in\R}\left\{t+\frac{1}{1-\alpha} H_g\[(X-t)_+\] \right\},
\end{equation}
where $g$ is a distortion function and $\alpha\in [0,1)$ is a confidence level.

In view of the generalized-ES, we can define the following generalized-ES norm, which also has two variations: scaled and non-scaled. For $\bm x=(x_1, \ldots, x_n)\in\R^n$, let $X$ be a random variable uniformly distributed on $\{x_1,\ldots,x_n\}$.
The scaled generalized-ES norm of $\bm x\in\R^n$, denote by  $\langle\!\langle \bm x\rangle\!\rangle_{\alpha}^{S,g}$, is defined by
\begin{equation}
   \label{eq-20250712}
  \langle\!\langle \bm x\rangle\!\rangle_{\alpha}^{S,g} =\cR^g_\alpha(|X|),\quad \alpha\in [0,1).
\end{equation}
The non-scaled generalized-ES norm is introduced in Section \ref{sect-3}. Note that $H_g[X]$ is a weighted summation, $H_g[X]=\sum_{i=1}^n c_i\, x_{(i)}$, where $c_i= g(i/n)-g((i-1)/n)$. Therefore, we have

\begin{definition}% \normalfont
 \label{def-GESnorm}
For $\bm x\in\R^n$, its scaled generalized-ES norm based on $\cR^g_\alpha$ is defined as follows:
\begin{equation}
  \label{eq-250401-3}
   \begin{aligned}
      \langle\!\langle\bm x\rangle\!\rangle_\alpha^{S,g} & =\min_{t\in\R}\left\{ t+\frac{1}{1-\alpha} \sum_{i=1}^n c_i ( |x|_{(i)}-t )_+\right\},\quad \alpha\in [0,1), \\
     \langle\!\langle \bm x\rangle\!\rangle_1^{S,g} & = \max_{1\le i\le n} |x_i| = |x|_{(n)},
   \end{aligned}
\end{equation}
where $c_i=g(i/n)- g((i-1)/n)$ for $i=1, \ldots, n$.
\end{definition}

Particularly, $\langle\!\langle \cdot \rangle\!\rangle_1^{S,g}$ coincides with the $L^\oo$ norm. As for $\alpha=0$,
\begin{equation}
 \label{eq-250401-6}
    \langle\!\langle \bm x \rangle\!\rangle_0^{S,g} =\min_{t\in\R} \left\{t+\sum_{i=1}^n c_i (|x|_{(i)}-t)_+\right\} = \sum_{i=1}^n c_i |x|_{(i)}.
\end{equation}
When the distortion function $g(t)=t$ for $t\in [0,1]$, we have $c_i=1/n$ for $i=1,\ldots,n$. Thus the scaled generalized-ES norm reduces to the scaled ES-norm. From the definition, it is easy to see that the scaled generalized-ES norm $\langle\!\langle \bm x\rangle\!\rangle_\alpha^{S,g}$ is increasing in $\alpha$ for fixed $\bm x\in\R^n$.

\begin{proposition}
 \label{pr-250707} 
If $g$ is a strictly increasing and convex distortion function, then $\langle\!\langle\cdot\rangle\!\rangle_\alpha^{S,g}$ is a norm for $\alpha\in [0,1]$, that is, it satisfies the following three properties:
\begin{enumerate}[label={\rm (\roman*)}]
  \item $\langle\!\langle\lm\cdot\bm x\rangle\!\rangle_\alpha^{S,g} = |\lm| \cdot\langle\!\langle \bm x\rangle\!\rangle_\alpha^{S,g}$ for $\lm\in\R$ and $\bm x\in\R^n$;

  \item $\langle\!\langle\bm x +\bm y \rangle\!\rangle_\alpha^{S,g}\le  \langle\!\langle\bm x \rangle\!\rangle_\alpha^{S,g} +\langle\!\langle\bm y \rangle\!\rangle_\alpha^{S,g}$ for $\bm x, \bm y\in\R^n$;

  \item $\langle\!\langle\bm x\rangle\!\rangle_\alpha^{S,g}=0$ if and only if $\bm x=\bm 0$.
\end{enumerate}
\end{proposition}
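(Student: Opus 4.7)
My plan is to verify the three norm axioms. The case $\alpha=1$ is immediate since $\langle\!\langle\cdot\rangle\!\rangle_1^{S,g}$ coincides with the $L^\oo$-norm, so I focus on $\alpha\in[0,1)$, where the variational representation \eqref{eq-250401-3} applies. Property (i) follows from $|\lambda x|_{(i)}=|\lambda|\,|x|_{(i)}$ together with the substitution $t=|\lambda|s$ in the infimum (with $\lambda=0$ handled trivially). For (iii), I plan to establish the lower bound $\langle\!\langle\bm x\rangle\!\rangle_\alpha^{S,g}\ge\sum_{i=1}^n c_i|x|_{(i)}$: since $\sum_i c_i=g(1)-g(0)=1$ and $\tfrac{1}{1-\alpha}\ge 1$, for every $t\in\R$,
\begin{equation*}
t+\tfrac{1}{1-\alpha}\sum_i c_i(|x|_{(i)}-t)_+ \;\ge\; t+\sum_i c_i(|x|_{(i)}-t) \;=\; \sum_i c_i|x|_{(i)}.
\end{equation*}
Strict monotonicity of $g$ gives $c_i>0$ for each $i$, so $\langle\!\langle\bm x\rangle\!\rangle_\alpha^{S,g}=0$ forces $|x|_{(i)}=0$ for all $i$, i.e.\ $\bm x=\bm 0$; the reverse implication is obvious by taking $t=0$.

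The substance of the proof is the triangle inequality (ii). The pivotal step will be subadditivity of the discrete distortion functional $f(\bm v):=\sum_{i=1}^n c_i v_{(i)}$ on non-negative vectors. Convexity of $g$ forces the weights $(c_i)$ to be monotone increasing in $i$, and the rearrangement inequality then supplies the representation
\begin{equation*}
f(\bm v) \;=\; \max_{\sigma\in S_n}\sum_{i=1}^n c_{\sigma(i)} v_i,
\end{equation*}
a pointwise maximum of linear functionals, and hence sublinear. (Equivalently, this is the classical criterion that the distortion risk measure $H_g=\rho_{\bar g}$ is subadditive iff $\bar g$ is concave iff $g$ is convex.) With subadditivity and the obvious componentwise monotonicity of $f$ in hand, I combine three elementary pointwise inequalities: $|x_i+y_i|\le|x_i|+|y_i|$; componentwise domination of non-negative vectors is inherited by order statistics, giving $|x+y|_{(i)}\le(|x|+|y|)_{(i)}$; and $(a+b-s-r)_+\le(a-s)_++(b-r)_+$, checked by cases. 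Fixing arbitrary $s,r\in\R$ and setting $t=s+r$ in \eqref{eq-250401-3}, the chain
\begin{equation*}
\sum_i c_i\bigl(|x+y|_{(i)}-s-r\bigr)_+ \;\le\; f\bigl((|x|-s)_++(|y|-r)_+\bigr) \;\le\; \sum_i c_i(|x|_{(i)}-s)_++\sum_i c_i(|y|_{(i)}-r)_+
\end{equation*}
holds; adding $s+r$ to both sides and taking infima separately over $s$ and $r$ yields $\langle\!\langle\bm x+\bm y\rangle\!\rangle_\alpha^{S,g}\le\langle\!\langle\bm x\rangle\!\rangle_\alpha^{S,g}+\langle\!\langle\bm y\rangle\!\rangle_\alpha^{S,g}$.

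The only step that does real work is the subadditivity of $f$; everything else reduces to pointwise inequalities and routine manipulation of the infimum. The rearrangement/OWA argument is self-contained and makes transparent that convexity of $g$ (equivalently, monotone weights $c_i$) is exactly the hypothesis needed---without it, $f$ need not be sublinear and the triangle inequality can fail.
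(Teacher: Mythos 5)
Your proof is correct, but it takes a different route from the paper on the one substantive point, the triangle inequality. The paper proves (ii) probabilistically: it sets $X=x_I$, $Y=y_I$ with $I$ uniform on $\{1,\dots,n\}$ and invokes Proposition~2.1 of Gong et al.\ (2024), which gives monotonicity and subadditivity of the generalized-ES $\cR^g_\alpha$ for increasing convex $g$, so that $\cR^g_\alpha(|X+Y|)\le\cR^g_\alpha(|X|+|Y|)\le\cR^g_\alpha(|X|)+\cR^g_\alpha(|Y|)$. You instead make the argument self-contained at the discrete level: convexity of $g$ makes the weights $c_i$ increasing, the rearrangement inequality identifies $f(\bm v)=\sum_i c_i v_{(i)}$ as a pointwise maximum of linear functionals (hence sublinear and componentwise monotone), and splitting the variational parameter as $t=s+r$ together with $(a+b-s-r)_+\le(a-s)_++(b-r)_+$ and separate infima over $s,r$ yields (ii); this is exactly the mechanism the cited proposition encapsulates, and it has the merit of exposing that monotone weights (i.e.\ convexity of $g$) is precisely what drives subadditivity, at the cost of reproving a known fact. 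For (iii) the paper uses monotonicity of the norm in $\alpha$ to reduce to the closed form at $\alpha=0$, while you prove the equivalent lower bound $\langle\!\langle\bm x\rangle\!\rangle_\alpha^{S,g}\ge\sum_i c_i|x|_{(i)}$ directly from $\sum_i c_i=1$ and $\tfrac1{1-\alpha}\ge1$; part (i) is identical. One cosmetic remark: in your displayed chain for (ii) the factor $\tfrac1{1-\alpha}$ is dropped; since it is a positive constant the inequality survives multiplication by it before adding $s+r$, but you should carry it through explicitly to match \eqref{eq-250401-3}.
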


\begin{proof}
(i)\ For $\alpha\in [0,1)$,
   \begin{align*}
      \langle\!\langle \lm \cdot \bm x \rangle\!\rangle_\alpha^{S,g}
       & =\min_{t\in\R}\left\{ t+\frac{1}{1-\alpha} \sum_{i=1}^n c_i \( |\lm x|_{(i)} -t\)_+\right\} \\
       & =\min_{t\in\R}\left\{ |\lm| t+\frac{1}{1-\alpha} \sum_{i=1}^n c_i \( |\lm x|_{(i)} -|\lm| t\)_+\right\} \\
        & =|\lm| \min_{t\in\R}\left\{ t+\frac{1}{1-\alpha} \sum_{i=1}^n c_i \( |x|_{(i)} -t\)_+\right\}
         = |\lm| \langle\!\langle \bm x \rangle\!\rangle_\alpha^{S,g}.
  \end{align*}
  For $\alpha=1$,
  $$
     \langle\!\langle \lm\cdot \bm x\rangle\!\rangle_1^{S,g} = \max_{1\le i\le n} |\lm x|_{(i)}
       = |\lm|\cdot |x|_{(n)} = |\lm| \langle\!\langle \bm x \rangle\!\rangle_1^{S,g}.
  $$

  (ii)\ For $\alpha=1$, it is trivial. We consider $\alpha\in[0,1)$. For $\bm x, \bm y\in \R^n$, define two random variables $X=x_I$ and $Y=y_I$, where $I$ follows the uniform distributions on $\{1, \ldots, n\}$. By Proposition 2.1 in \cite{GZGH24}, $\cR_\alpha$ possesses monotonicity and subadditivity since $g$ is increasing and convex. Therefore,
  \begin{align*}
    \langle\!\langle \bm x +\bm y \rangle\!\rangle_\alpha^{S,g}
      & = \cR^g_{\alpha} (|X+Y|) \le \cR^g_{\alpha} (|X| + |Y|) 
       \le \cR^g_{\alpha} (|X|) + \cR^g_{\alpha} (|Y|)
      = \langle\!\langle \bm x \rangle\!\rangle_\alpha^{S,g}
        + \langle\!\langle \bm y \rangle\!\rangle_\alpha^{S,g}.
\end{align*}

(iii)\ It suffices to prove the necessity. Since $\langle\!\langle\bm x\rangle\!\rangle_{\alpha}^{S,g}$ is increasing in $\alpha$, we have
  $$
     0=\langle\!\langle\bm x\rangle\!\rangle_{\alpha}^{S,g} \ge \langle\!\langle \bm x\rangle\!\rangle_0^{S,g} = \sum_{i=1}^n c_i |x|_{(i)}=0.
  $$
  Consequently, $x_i=0$ for $i=1,\ldots,n$ since $g$ is strictly increasing.
\end{proof}

Though Proposition \ref{pr-250707} establishes that $g$ being a strictly increasing and convex function is a necessary and sufficient condition for $\langle\!\langle\cdot \rangle\!\rangle_{\alpha}^{S,g}$ to be a norm, many results discussed subsequently do not require $g$ to be convex unless explicitly stated. 

By Theorem 3.1 (\emph{Representation theorem of generalized-ES}) in \cite{GZGH24}, we give an equivalent definition of the scaled generalized-ES norm in the next proposition.

\begin{proposition} % \normalfont
 \label{pr-250406}
Let $g$ be a continuous distortion function, and denote $\alpha_j=g(j/n)$ for $j=0, \ldots, n$.
\begin{itemize}
 \item[{\rm (i)}] For $j=0, \ldots, n-1$, we have
   \begin{equation}
     \label{eq-250402-1}
      \langle\!\langle\bm x\rangle\!\rangle_{\alpha_j}^{S,g}=\frac{1}{1-\alpha_j} \sum_{i=j+1}^n c_i |x|_{(i)},
   \end{equation}
   where $c_i=g(i/n)-g((i-1)/n)=\alpha_i-\alpha_{i-1}$ for each $i$.

 \item[{\rm (ii)}] For $\alpha$ such that $\alpha_j<\alpha<\alpha_{j+1}$, $j=0, \ldots, n-2$, $\langle\!\langle\bm x\rangle\! \rangle_{\alpha}^{S, g}$ equals the weighted average of $\langle\! \langle \bm x \rangle\!\rangle_{\alpha_j}^{S, g}$ and $\langle\!\langle\bm x\rangle\!\rangle_{\alpha_{j+1}}^{S, g}$:
     \begin{equation}
       \label{eq-250402-2}
       \langle\!\langle\bm x\rangle\!\rangle_{\alpha}^{S,g}=\mu\langle\!\langle\bm x\rangle\!\rangle_{\alpha_j}^{S,g}
        +(1-\mu)\langle\!\langle\bm x\rangle\!\rangle_{\alpha_{j+1}}^{S, g},
     \end{equation}
    where $\mu=(\alpha_{j+1}-\alpha)(1-\alpha_j)/[(\alpha_{j+1}-\alpha_j)(1-\alpha)]\in(0,1)$, satisfying
    $$
          \frac{1}{1-\alpha}=\frac{\mu}{1-\alpha_j}+\frac{1-\mu}{1-\alpha_{j+1}}.
    $$

 \item[{\rm (iii)}] For $\alpha\in(\alpha_{n-1},1)$, we have
     \begin{equation*}
       \label{eq-250402-3}
        \langle\!\langle \bm x \rangle\!\rangle_\alpha^{S,g} = |x|_{(n)}.
     \end{equation*}
\end{itemize}
\end{proposition}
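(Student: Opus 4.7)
The plan is to analyze the piecewise-linear convex function
$$f(t) := t + \frac{1}{1-\alpha}\sum_{i=1}^n c_i(|x|_{(i)}-t)_+,$$
whose infimum over $t\in\R$ defines $\langle\!\langle\bm x\rangle\!\rangle_\alpha^{S,g}$. Its breakpoints lie at $t=|x|_{(j)}$ for $j=1,\ldots,n$, and a direct computation shows that on the interval $[|x|_{(j)},|x|_{(j+1)}]$ the constant slope equals $1-(1-\alpha_j)/(1-\alpha)$, using the telescoping identity $\sum_{i=j+1}^n c_i = g(1)-g(j/n) = 1-\alpha_j$. Convexity of $f$ then reduces the problem to locating the kink at which the slope changes sign.

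For part (i) with $\alpha=\alpha_j$, this slope vanishes on $[|x|_{(j)},|x|_{(j+1)}]$, is non-positive on the subintervals to its left (since $\alpha_k\le\alpha_j$ for $k\le j$), and non-negative on those to its right, so every point of $[|x|_{(j)},|x|_{(j+1)}]$ is a minimizer. Substituting $t^*=|x|_{(j)}$ and again using $\sum_{i=j+1}^n c_i=1-\alpha_j$ yields the claimed identity. Part (iii) follows by a parallel and simpler analysis: for $\alpha\in(\alpha_{n-1},1)$ the slope on $[|x|_{(n-1)},|x|_{(n)}]$ is strictly negative while that on $[|x|_{(n)},\infty)$ equals $1$, forcing $t^*=|x|_{(n)}$ and hence $f(t^*)=|x|_{(n)}$.

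Part (ii) is where the real bookkeeping sits. For $\alpha\in(\alpha_j,\alpha_{j+1})$ the slope on $[|x|_{(j)},|x|_{(j+1)}]$ is strictly negative and that on $[|x|_{(j+1)},|x|_{(j+2)}]$ strictly positive, so the unique minimizer is $t^*=|x|_{(j+1)}$, giving
$$\langle\!\langle\bm x\rangle\!\rangle_\alpha^{S,g}=\frac{\alpha_{j+1}-\alpha}{1-\alpha}|x|_{(j+1)} + \frac{1}{1-\alpha}\sum_{i=j+2}^n c_i|x|_{(i)}.$$
Matching this with the stated convex combination is the main algebraic obstacle: plugging the part-(i) formulas into $\mu\langle\!\langle\bm x\rangle\!\rangle_{\alpha_j}^{S,g}+(1-\mu)\langle\!\langle\bm x\rangle\!\rangle_{\alpha_{j+1}}^{S,g}$, the coefficient of each $|x|_{(i)}$ for $i\ge j+2$ collapses to $c_i/(1-\alpha)$ by the defining identity $\mu/(1-\alpha_j)+(1-\mu)/(1-\alpha_{j+1})=1/(1-\alpha)$, while the coefficient of $|x|_{(j+1)}$, appearing only in the $\alpha_j$-term, simplifies to $\mu c_{j+1}/(1-\alpha_j) = (\alpha_{j+1}-\alpha)/(1-\alpha)$ after inserting the explicit $\mu$ and $c_{j+1}=\alpha_{j+1}-\alpha_j$. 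Alternatively, one can shortcut the entire argument by invoking Theorem 3.1 of \cite{GZGH24} applied to $|X|$ placing mass $1/n$ at each $|x|_{(i)}$, whose three cases specialize directly to (i)--(iii).
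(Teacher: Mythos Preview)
Your proof is correct and takes a more elementary route than the paper. The paper invokes Theorem~3.1 of \cite{GZGH24} to identify, via the generalized inverses $g^{-1}$, $g^{-1+}$, $F_{|X|}^{-1}$, $F_{|X|}^{-1+}$, an interval in which the optimal $t$ must lie, and then substitutes a point of that interval and simplifies; you instead work directly with the piecewise-linear structure of $f(t)$, reading off the slope $1-(1-\alpha_j)/(1-\alpha)$ on each segment from the telescoping identity $\sum_{i=j+1}^n c_i=1-\alpha_j$ and locating the minimizer by the sign change. The subsequent algebraic verification (parts (i)--(iii)) is essentially identical in both arguments. What your approach buys is self-containment: it needs no external representation theorem and in fact never uses the continuity hypothesis on $g$ (only the values $g(j/n)$ enter). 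What the paper's approach buys is economy once that theorem is in hand, and it makes explicit that the proposition is a direct specialization of the general generalized-ES representation. One small wrinkle in your write-up: for $j=0$ the phrase ``substituting $t^*=|x|_{(j)}$'' is ill-defined; you should instead take any $t^*\le |x|_{(1)}$, where the slope is already zero, and the formula $\sum_{i=1}^n c_i|x|_{(i)}$ drops out immediately.
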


\begin{proof}
Without loss of generality, assume that all the $|x_i|$ are different. Let $X$ be a discrete random variable uniformly distributed on $\{x_1,\ldots,x_n\}$. Denote the left inverse of $g$ by
$$
   g^{-1}(t)=\inf\{x\in [0,1]: g(x)\ge t\},\quad t\in [0,1],
$$
and its right inverse by
$$
   g^{-1+}(t)=\inf\{x\in [0,1]: g(x)>t\},\ t\in [0,1),
$$
with $g^{-1+}(1)=1$. Similarly, the right-continuous inverse of $F_X$ is defined by $F_X^{-1+}(\alpha):=\inf\{x: F_X(x)>\alpha\}$ for $\alpha\in [0,1)$ with $F_X^{-1+}(1)={\rm esssup}(X)$. By Theorem 3.1 of \cite{GZGH24}, we have $u_{\alpha_j}=g^{-1}(\alpha_j)=g^{-1}(g(j/n))\le j/n$ and $v_{\alpha_j}=g^{-1+}(\alpha_j)\ge j/n$ since $g$ is continuous. Then
$$
   F_{|X|}^{-1}(u_{\alpha_j}) \le F_{|X|}^{-1}\(\frac {j}{n}\) =|x|_{(j)},\quad
    F_{|X|}^{-1+}(v_{\alpha_j}) \ge F_{|X|}^{-1+}\(\frac {j}{n}\) =|x|_{(j+1)}.
$$
Choose $t_{\alpha_j}:=|x|_{(j)}\in \[F_{|X|}^{-1}(u_{\alpha_j}),  F_{|X|}^{-1+}(v_{\alpha_j})\]$,
again by Theorem 3.1(3) of \cite{GZGH24}, we have
\begin{align}
  \langle\!\langle \bm x\rangle\!\rangle _{\alpha_j}^{S,g} & = \cR^g_{\alpha_j}(|X|)
   = t_{\alpha_j} + \frac{1}{1-\alpha_j}\sum_{i=1}^n c_i \(|x|_{(i)}-t_{\alpha_j}\)_+ \nonumber \\
  &= |x|_{(j)}+ \frac{1}{1-\alpha_j}\sum_{i=1}^n c_i \(|x|_{(i)}-|x|_{(j)}\)_+ \nonumber \\
  &= |x|_{(j)}+ \frac{1}{1-\alpha_j} \sum_{i=j+1}^n c_i \(|x|_{(i)}-|x|_{(j)}\) \nonumber \\
  &= |x|_{(j)}\(1-\frac{1}{1-\alpha_j}\sum_{i=j+1}^n c_i\)+\frac{1}{1-\alpha_j}\sum_{i=j+1}^n c_i|x|_{(i)}\nonumber \\
  & = \frac{1}{1-\alpha_j}\sum_{i=j+1}^n c_i|x|_{(i)}, \quad j=1,\ldots,n-1,     \label{eq-250402-4}
\end{align}
where the last equation follows from the fact that $\sum_{i=j+1}^n c_i=\sum_{i=j+1}^n [g(i/n)-g((i-1)/n)]  =g(1)- g(j/n)=1-\alpha_j$. When $\alpha=0$, \eqref{eq-250401-6} is identical to \eqref{eq-250402-1} with $j=0$.

For $\alpha$ such that $\alpha_{j}<\alpha<\alpha_{j+1}$ with $j\in \{0,\ldots,n-1\}$, since $g$ is increasing and continuous, it follows that $g^{-1}(p)$ and $g^{-1+}(p)$ are strictly increasing. Thus, we have $j/n < g^{-1}(\alpha)\le g^{-1+}(\alpha) < (j+1)/n$ and, hence,
$$
   F_{|X|}^{-1}(u_\alpha) \ge F_{|X|}^{-1}\(\frac {j}{n}+\)= |x|_{(j+1)},\quad
    F_{|X|}^{-1+}(v_{\alpha_j}) \le F_{|X|}^{-1}\(\frac {j+1}{n}-\) =|x|_{(j+1)}.
$$
Note that, in this case, the interval $\[F_{|X|}^{-1}(u_{\alpha_j}),  F_{|X|}^{-1+}(v_{\alpha_j})\]$ contains only one point $t_\alpha =|x|_{(j+1)}$. Again by Theorem 3.1(3) in \cite{GZGH24} and \eqref{eq-250402-4}, we have
\begin{equation*}
 \begin{aligned}
  \langle\!\langle \bm x \rangle\!\rangle_\alpha^{S,g} & = \cR^g_\alpha(|X|)
      = |x|_{(j+1)}+ \frac{1}{1-\alpha} \sum_{i=1}^n c_i \(|x|_{(i)}-|x|_{(j+1)}\)_+ \\
   & = |x|_{(j+1)} +\frac{\mu}{1-\alpha_j} \sum_{i=j+1}^n c_i \(|x|_{(i)} -|x|_{(j+1)}\)
          + \frac{1-\mu}{1-\alpha_{j+1}} \sum_{i=j+2}^n c_i \(|x|_{(i)} -|x|_{(j+1)}\)\\
   & = \mu \langle\!\langle \bm x \rangle\!\rangle_{\alpha_j}^{S,g} + (1-\mu)\langle\!\langle \bm x
         \rangle\! \rangle_{\alpha_{j+1}}^{S,g} + |x|_{(j+1)} \left(1-\frac{\mu}{1-\alpha_j}\sum_{i=j+1}^n c_i-\frac{1-\mu}{1-\alpha_{j+1}}\sum_{i=j+2}^{n} c_i\right)\\
   & = \mu \langle\!\langle \bm x\rangle\!\rangle_{\alpha_j}^{S,g}
            + (1-\mu)\langle\!\langle \bm x \rangle\!\rangle_{\alpha_{j+1}}^{S,g},
  \end{aligned}
\end{equation*}
where the last equality follows since
$$
  1-\frac{\mu}{1-\alpha_j}\sum_{i=j+1}^n c_i-\frac{1-\mu}{1-\alpha_{j+1}}\sum_{i=j+2}^{n} c_i
    =1- \frac{\mu}{1-\alpha_j}\cdot (1-\alpha_j) -\frac{1-\mu}{1-\alpha_{j+1}} (1-\alpha_{j+1})=0.
$$

For $\alpha\in (\alpha_{n-1},1)$, it is seen that the interval $\[F_{|X|}^{-1}(u_{\alpha}),  F_{|X|}^{-1+}(v_{\alpha})\]$ contains only one point $t_\alpha =|x|_{(n)}$. Then
$$
   \langle\!\langle \bm x \rangle\!\rangle_\alpha^{S,g} = |x|_{(n)}+\frac {1}{1-\alpha} \sum_{i=1}^n c_i \(|x|_{(i)}-|x|_{(n)}\)_+ =|x|_{(n)}.
$$
This completes the proof of the proposition.
\end{proof}

Finally, note that the scaled generalized-ES norm $\langle\!\langle \bm x \rangle\!\rangle_\alpha^{S,g}$ can be obtained by solving the following optimization problem:
\begin{equation}
 \label{eq-250506}
\begin{aligned}
    \langle\!\langle \bm x \rangle\!\rangle_\alpha^{S,g} &= \min_{t\in\R} \left\{t+\frac{1}{1-\alpha}\sum_{i=1}^n c_i z_i\right\} \\
        \text{s.t.}\quad & z_i\ge |x|_{(i)}-t, \ \  i=1,\ldots,n,\\
     & z_i\ge 0,\ \  i=1,\ldots,n.
\end{aligned}
\end{equation}

\begin{example} \normalfont
 \label{ex-250405}
  Let $\bm x=(-2, 1, 7, 10, -12)\in\R^5$ and choose a distortion function $g(u)=u^2$. Then $(|x|_{(1)}$, $|x|_{(2)}, \ldots, |x|_{(5)})= (1, 2, 7, 10, 12)$, $(\alpha_1, \ldots, \alpha_5)=(0.04, 0.16, 0.36, 0.64, 1)$ and $(c_1, c_2, c_3, c_4, c_5)= (0.04, 0.12, 0.2, 0.28, 0.36)$. Hence,
\begin{align*}
  \langle\!\langle \bm x\rangle\!\rangle_{0}^{S,g} &
        = 0.04\times 1 + 0.12\times 2 +0.2\times 7 +0.28\times 10 + 0.36\times 12=8.8,\\
  \langle\!\langle \bm x\rangle\!\rangle_{0.04}^{S,g} &
       = \frac {1}{1-0.04}\big [0.12\times 2 +0.2\times 7 +0.28\times 10 + 0.36\times 12\big ]=9.125,\\
  \langle\!\langle \bm x\rangle\!\rangle_{0.16}^{S,g} &
       = \frac {1}{1-0.16}\big [0.2\times 7 +0.28\times 10 + 0.36\times 12\big ]=10.143,\\
  \langle\!\langle \bm x\rangle\!\rangle_{0.36}^{S,g} &
       = \frac {1}{1-0.36}\big [0.28\times 10 + 0.36\times 12\big ]=11.125,\\
  \langle\!\langle \bm x\rangle\!\rangle_{\alpha}^{S,g} & =12,\quad \alpha\in [0.64, 1].
\end{align*}
   For $\alpha=0.5\in (\alpha_3, \alpha_4)$,  $\langle\!\langle \bm x\rangle\!\rangle_{\alpha}^{S,g}$ is the weighted average of $\langle\!\langle \bm x\rangle\!\rangle_{0.36}^{S,g}$ and $\langle\!\langle \bm x\rangle\!\rangle_{0.64}^{S,g}$ with respective weights
   $$
      \mu=\frac {(\alpha_4-\alpha)(1-\alpha_3)}{(\alpha_4-\alpha_3)(1-\alpha)} =0.64\ \ \hbox{and}\ \ 1-\mu=0.36.
   $$
   Thus, $\langle\!\langle \bm x\rangle\!\rangle_{0.5}^{S,g}= 0.64 \langle\!\langle \bm x\rangle\!\rangle_{0.36}^{S,g} + 0.36 \langle\!\langle \bm x\rangle\!\rangle_{0.64}^{S,g}=0.64\times 11.125+0.36\times 12=11.44$.
\end{example}

\begin{example} \normalfont
 \label{ex-250406}
    Consider the scaled generalized-ES norms $\langle\!\langle \cdot\rangle\!\rangle_\alpha^{S,g_1}$, $\langle\!\langle \cdot\rangle\!\rangle_\alpha^{S,g_2}$ and $\langle\!\langle \cdot \rangle\!\rangle_\alpha^{S,g_3}$ in $\R^2$, where distortion functions $g_1(u)=u^2$, $g_2(u)=u$ and $g_3(u)=u^{1/2}$ for $u\in (0,1)$. In view of the monotonicity of the generalized-ES norms $\langle\!\langle \bm x\rangle\!\rangle_\alpha^{S,g_i}$  with respect to $\alpha$ for fixed $\bm x\in\R^2$, it can be checked that
\begin{align*}
  \langle\!\langle \bm x\rangle\!\rangle_\alpha^{S,g_1} &=\left \{\begin{array}{ll}
    \dfrac {1}{4} |x|_{(1)} +\dfrac {3}{4} |x|_{(2)}, & \alpha=0,\\[8pt]
    \dfrac {1-4\alpha}{4(1-\alpha)} |x|_{(1)}+\dfrac {3}{4(1-\alpha)} |x|_{(2)}, & \alpha\in \(0, \dfrac {1}{4}\),\\[8pt]
    |x|_{(2)}, & \alpha\in \[\dfrac {1}{4}, 1\];          \end{array} \right.
\end{align*}
\begin{align*}
  \langle\!\langle \bm x\rangle\!\rangle_\alpha^{S,g_2} &=\left \{\begin{array}{ll}
    \dfrac {1}{2} (|x_1| +|x_2|), & \alpha=0,\\[8pt]
    \dfrac {1-2\alpha}{2(1-\alpha)} |x|_{(1)}+\dfrac {1}{2(1-\alpha)} |x|_{(2)}, & \alpha\in \(0, \dfrac {1}{2}\),\\[8pt]
    |x|_{(2)}, & \alpha\in \[\dfrac {1}{2}, 1\];          \end{array} \right.
\end{align*}
and
\begin{align*}
  \langle\!\langle \bm x\rangle\!\rangle_\alpha^{S,g_3} &=\left \{\begin{array}{ll}
    \dfrac {\sqrt{2}}{2} |x|_{(1)} +\dfrac {2-\sqrt{2}}{2} |x|_{(2)}, & \alpha=0,\\[8pt]
    \dfrac {\sqrt{2}-2\alpha}{2(1-\alpha)} |x|_{(1)}+\dfrac {2-\sqrt{2}}{2(1-\alpha)} |x|_{(2)}, & \alpha\in \(0, \dfrac {\sqrt{2}}{2}\),\\[8pt]
    |x|_{(2)}, & \alpha\in \[\dfrac {\sqrt{2}}{2}, 1\].          \end{array} \right.
\end{align*}

\begin{figure}[htbp]
  \centering
 \subfigure[\ ]{ \includegraphics[scale=0.35]{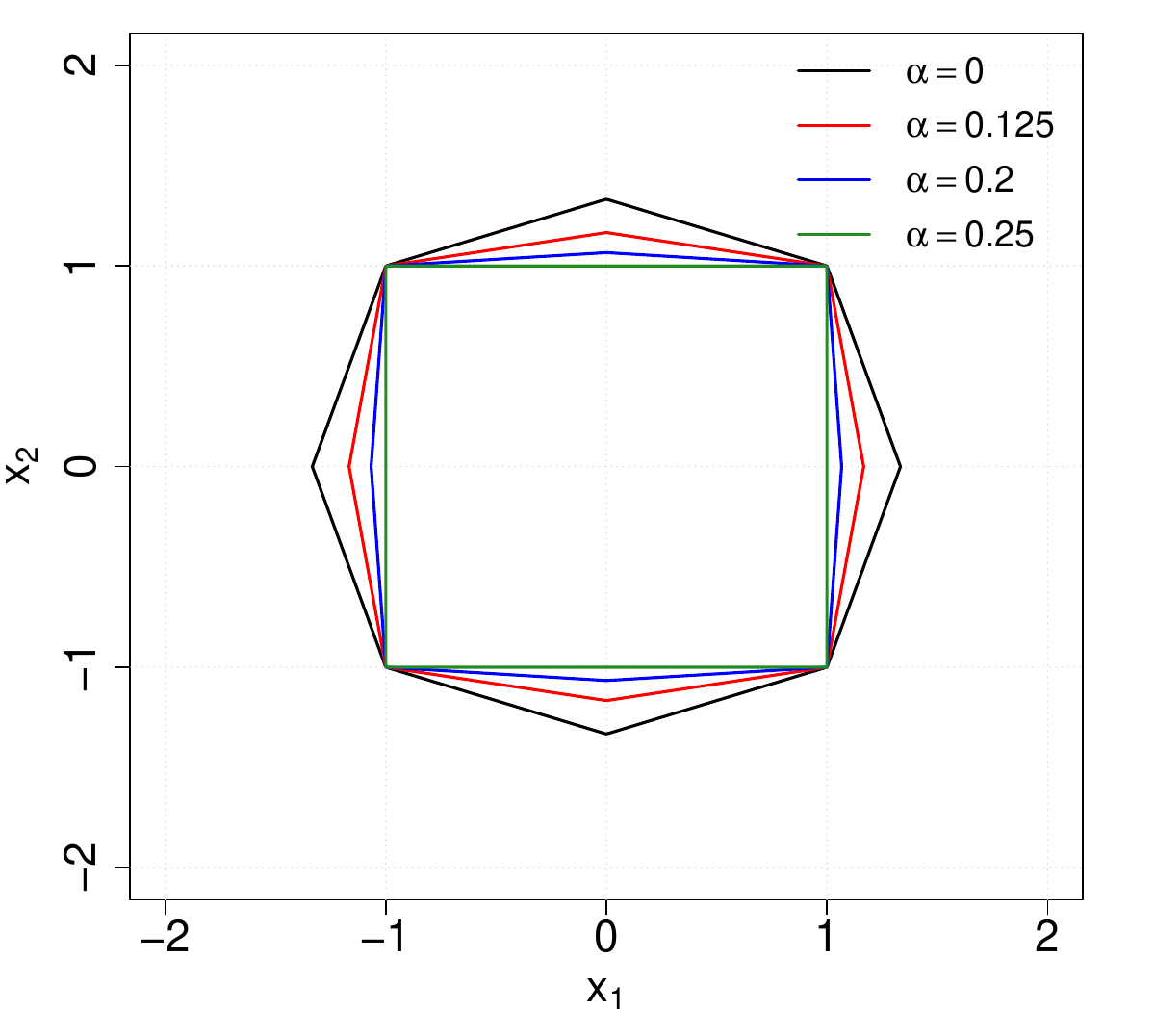} }
  \subfigure[\ ]{ \includegraphics[scale=0.35]{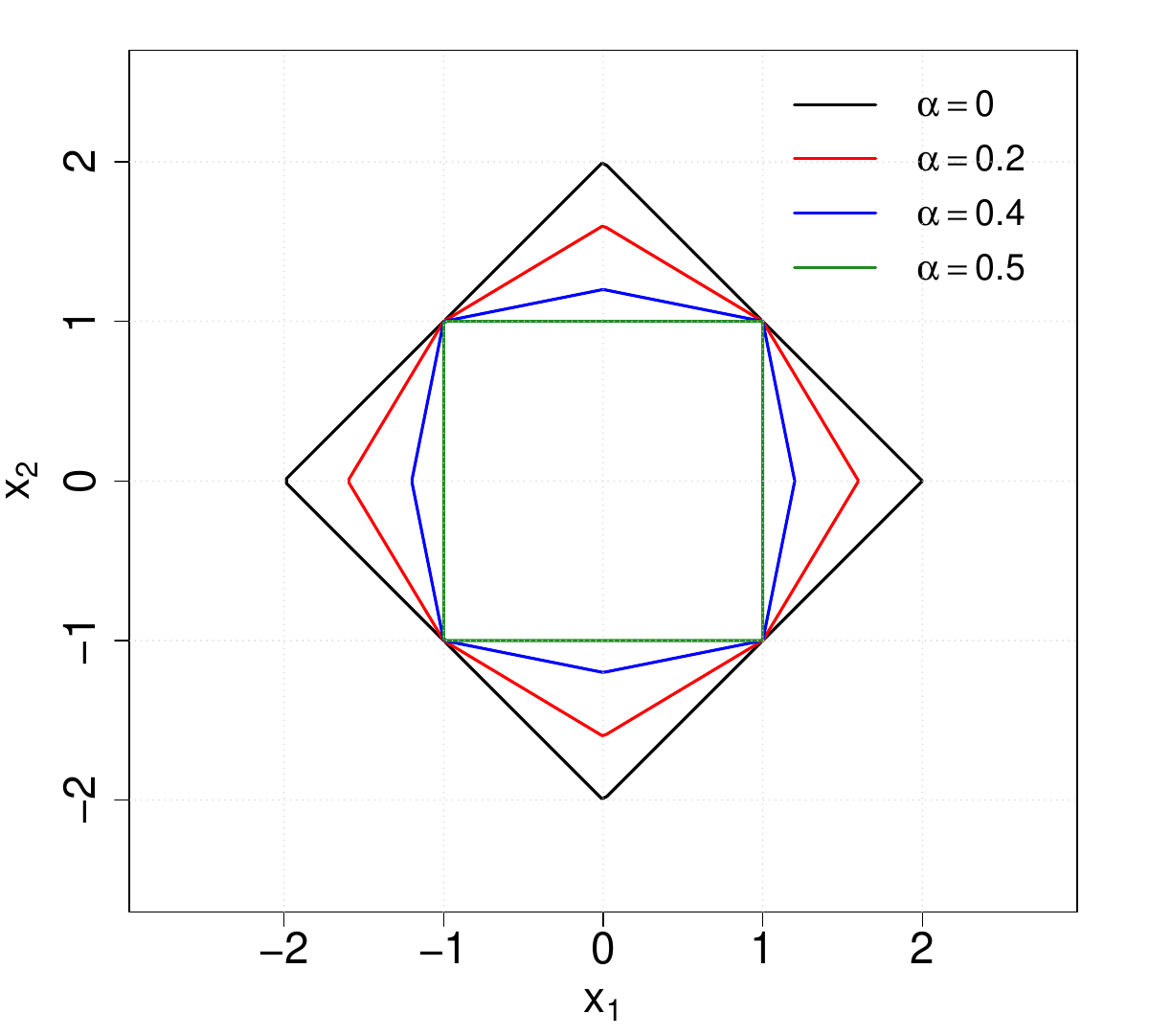} }
  \subfigure[\ ]{ \includegraphics[scale=0.35]{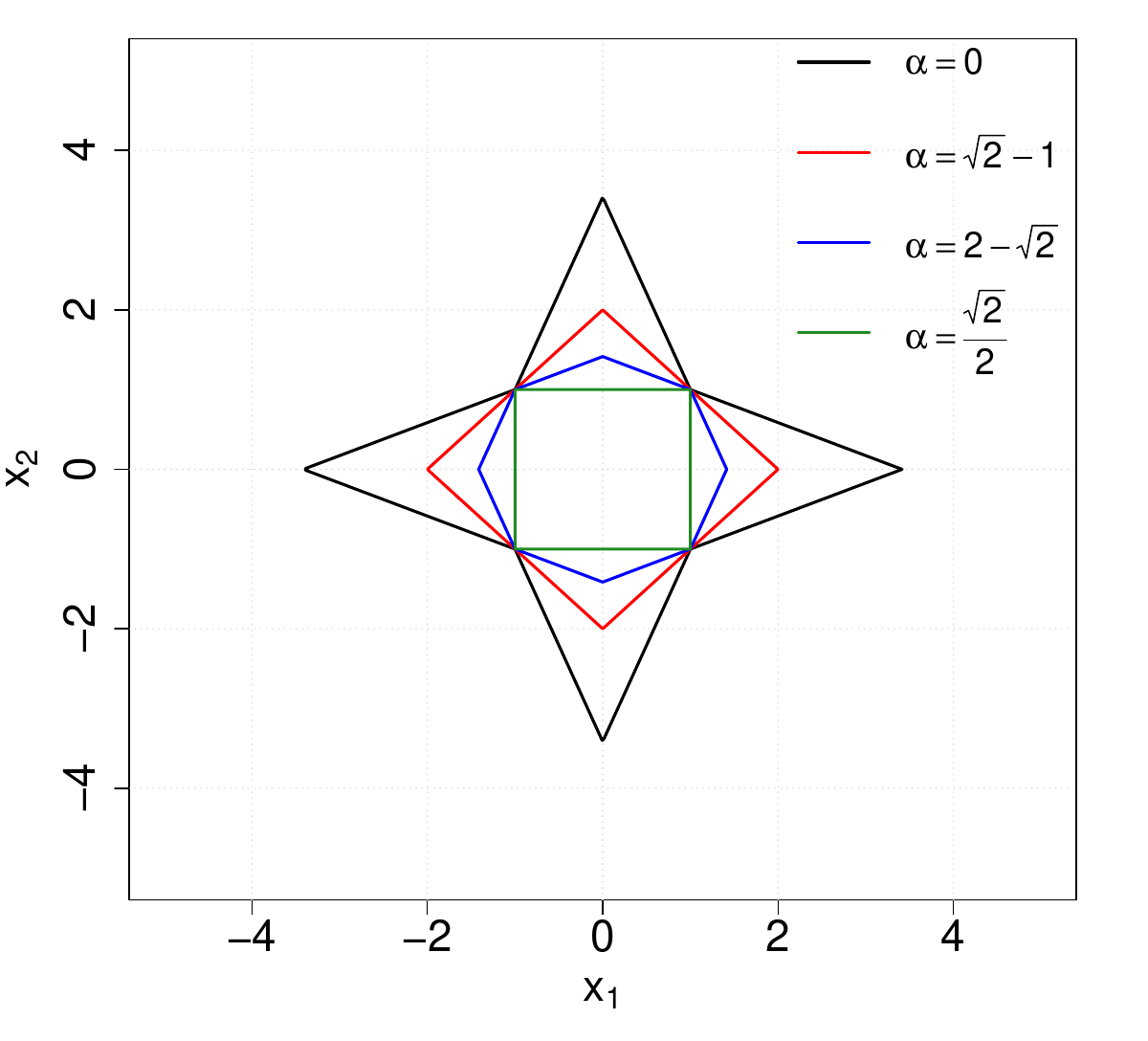} }
  \subfigure[\ ]{ \includegraphics[scale=0.35]{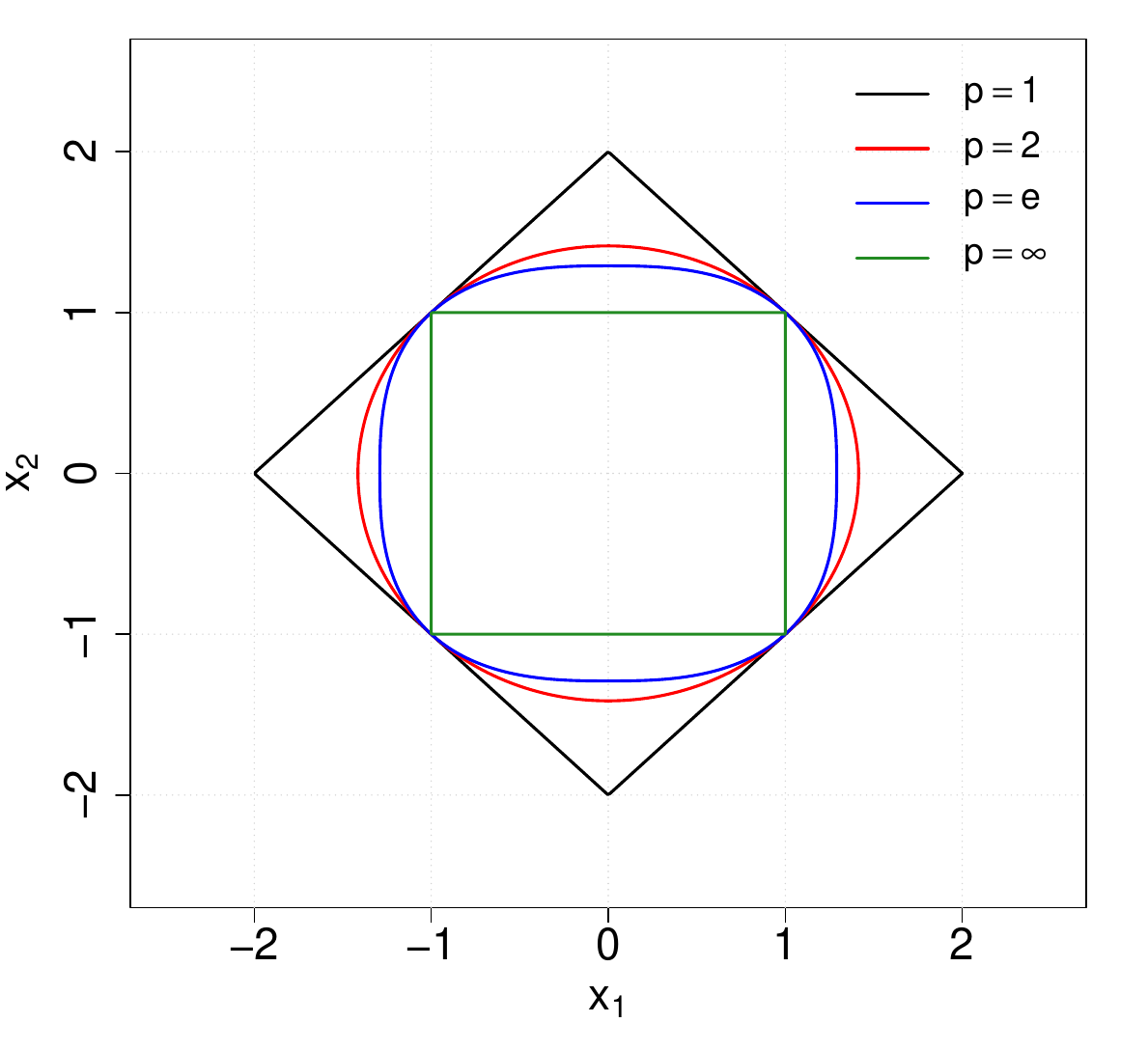} }
  \caption{(a) Unit disk $\mathcal{U}_{S,g_1}$ with $g_1(u) = u^2$ and $\alpha \in \{0, 0.125, 0.2, 0.25\}$; (b) Unit disk $\mathcal{U}_{S,g_2}$ with $g_2(u) = u$ and $\alpha \in \{0, 0.2, 0.4, 0.5\}$; (c) Unit disk $\mathcal{U}_{S,g_3}$ with $g_3(u) = \sqrt{u}$ and $\alpha \in\{0, \sqrt{2}-1, 2-\sqrt{2}, \sqrt{2}/2\}$; (d) Unit disk $\mathcal{U}_{\|\cdot\|_p^S}$ based on the scaled $L^S_p$-norm with $p\in \{1, 2, e, \oo\}$. }
   \label{Fig-SGES-disk}
\end{figure}

   A unit disk (ball) for a norm $\|\cdot\|$ in $\R^n$ is a set of vectors with their norm less than or equal $1$, i.e., $\mathcal{U}_{\|\cdot\|}: =\{\bm x\in\R^n: \|\bm x\| \le 1\}$. Denote by $\mathcal{U}_{S,g}$ the unit disk based on the scaled generalized-ES norm $\langle\!\langle \cdot\rangle\!\rangle_\alpha^{S,g}$. Figure \ref{Fig-SGES-disk} depicts the unit disks $\mathcal{U}_{S,g_1}$, $\mathcal{U}_{S,g_2}$ and $\mathcal{U}_{S,g_3}$ with different $\alpha$ and the unit disk  $\mathcal{U}_{\|\cdot\|_p^S}$ based on the scaled $L_p^S$-norm $\|\cdot\|_p^S$ with different $p\ge 1$, where
   $$
     \|\bm x\|_p^S =\(\frac {1}{n}\sum^n_{i=1} |x_i|^p\)^{1/p},\quad \bm x\in\R^n.
   $$
   The figure shows that when the distortion function is $g_1(u)=u^2$, the unit disk $\mathcal{U}_{S,g_1}$ deforms continuously from a regular octagon to the unit disk of $L_\infty$-norm (a regular quadrilateral) as $\alpha$ gradually increases from $0$ to $0.25$. For the case of $g_3(u)=\sqrt{u}$, as $\alpha$ increases from $0$, the unit disk $\mathcal{U}_{S,g_3}$ initially manifests as a star-shaped octagon and remains non-convex until $\alpha$ reaches $\sqrt{2}-1$, at which point the unit disk becomes a square diamond (i.e. a regular rhombus). Then as $\alpha$ further increases, the shape evolves into a convex octagon and gradually transforms into a regular octagon when $\alpha=2-\sqrt{2}$. Eventually, it also deforms into the regular quadrilateral as $\alpha$ continues to grow to $\sqrt{2}/2$. Such a somewhat peculiar transformation behavior arises from the concavity of the distortion function $g_3$.
\end{example}

From Figure \ref{Fig-SGES-disk}, it can be seen that $\mathcal{U}_{\|\cdot\|_p^S}$, $\mathcal{U}_{S,g_1}$ and  $\mathcal{U}_{S,g_2}$ are convex sets, while $\mathcal{U}_{S,g_3}$ is not a convex set when $\alpha=0$. The next proposition gives a condition upon distortion function $g$ under which $\mathcal{U}_{S,g}$ is convex.

\begin{proposition}
  \label{pr-250405}
If $g(u)$ is convex on $[0,1]$, then the unit disk $\mathcal{U}_{S,g}$ based on scaled generalized-ES norm $\langle\!\langle \cdot\rangle\!\rangle_\alpha^{S,g}$ is a convex set in $\R^n$. If $g(u)$ is concave on $[0,1]$, then $\mathcal{U}_{S,g}$ is in general not a convex set.
\end{proposition}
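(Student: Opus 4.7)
The plan is to treat the two assertions separately, exploiting the structural properties already established for $\langle\!\langle\cdot\rangle\!\rangle_\alpha^{S,g}$.

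For the convex part, I would observe that the argument reduces to sublinearity of $\langle\!\langle \cdot \rangle\!\rangle_\alpha^{S,g}$, rather than to its being a norm. A careful inspection of the proof of Proposition \ref{pr-250707} shows that step (i) (positive homogeneity) is purely algebraic and requires no assumption on $g$ beyond it being a distortion function, while step (ii) (the triangle inequality) only uses the monotonicity and subadditivity of $\cR^g_\alpha$ furnished by Proposition 2.1 of \cite{GZGH24}, which hold whenever $g$ is increasing and convex; strict monotonicity is not needed. Consequently, for any $\bm x, \bm y \in \mathcal{U}_{S,g}$ and $\lambda \in [0,1]$,
\begin{equation*}
\langle\!\langle \lambda\bm x + (1-\lambda)\bm y \rangle\!\rangle_\alpha^{S,g} \le \lambda \langle\!\langle \bm x \rangle\!\rangle_\alpha^{S,g} + (1-\lambda)\langle\!\langle \bm y \rangle\!\rangle_\alpha^{S,g} \le 1,
\end{equation*}
so $\lambda\bm x + (1-\lambda)\bm y \in \mathcal{U}_{S,g}$, proving convexity.

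For the second assertion, since the claim is merely that convexity fails \emph{in general}, a single explicit counterexample suffices. I would recycle the setting of Example \ref{ex-250406}: take $n=2$, the concave distortion $g_3(u)=\sqrt u$, and $\alpha=0$, so the norm simplifies to $\langle\!\langle \bm x \rangle\!\rangle_0^{S,g_3} = \tfrac{\sqrt 2}{2}|x|_{(1)}+\bigl(1-\tfrac{\sqrt 2}{2}\bigr)|x|_{(2)}$. Setting $a = 2+\sqrt 2$, both $\bm x = (a,0)$ and $\bm y = (0,a)$ lie on the unit sphere, whereas their midpoint $(a/2,a/2)$ has norm $a/2 = (2+\sqrt 2)/2 > 1$. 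This pair witnesses that $\mathcal{U}_{S,g_3}$ is not convex, matching the star-shaped region in Figure \ref{Fig-SGES-disk}(c).

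The step I expect to require the most care is the first: one must be sure that the passage from Proposition \ref{pr-250707} to this weaker sublinearity statement really does go through without strict monotonicity, since Proposition \ref{pr-250707}(iii) (definiteness) \emph{does} use it. Once the subadditivity half of Proposition \ref{pr-250707} is seen to depend only on convexity of $g$, the convexity of the unit disk is immediate from the sublevel-set characterization of sublinear functionals, and the counterexample in the concave case is a short direct computation.
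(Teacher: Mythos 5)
Your argument is correct and follows essentially the same route as the paper: convexity of $\mathcal{U}_{S,g}$ is deduced from the fact that $\langle\!\langle\cdot\rangle\!\rangle_\alpha^{S,g}$ is a convex functional (you package this as positive homogeneity plus the triangle inequality, the paper invokes monotonicity and convexity of $\cR^g_\alpha$ directly), with both arguments resting on the same Proposition 2.1 of \cite{GZGH24}, and the concave case is settled by the same example $g_3(u)=\sqrt{u}$, $\alpha=0$, $n=2$. The only difference is that the paper simply points to Figure \ref{Fig-SGES-disk}(c), whereas your explicit pair $(2+\sqrt{2},0)$, $(0,2+\sqrt{2})$ with midpoint of norm $(2+\sqrt{2})/2>1$ makes the counterexample self-contained --- a minor but genuine improvement in rigor.
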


\begin{proof}
First, assume $g(u)$ is convex in $u$. Then, by Proposition 2.1 in \cite{GZGH24}, $\cR_\alpha$ possesses the convexity for $\alpha\in [0,1)$, that is, $\cR^g_\alpha(\beta X + (1-\beta) Y) \le \beta\cR^g_\alpha(X) + (1-\beta)\cR^g_\alpha(Y)$ for $X, Y\in L^\oo$ and $\beta\in [0,1]$. Choose any $\bm x, \bm y\in \mathcal{U}_{S,g}$,  that is, $ \langle\!\langle \bm x \rangle\!\rangle_\alpha^{S,g}\le 1$ and $ \langle\!\langle \bm y \rangle\!\rangle_\alpha^{S,g}\le 1$. Define two random variables $X=x_I$ and $Y=y_I$, where $I$ follows the uniform distributions on $\{1, \ldots, n\}$. Then, for any $\beta\in [0,1]$,
\begin{align*}
  \langle\!\langle \beta \bm x + (1-\beta)\bm y \rangle\!\rangle_\alpha^{S,g}
    & = \cR^g_{\alpha} (|\beta X+(1-\beta) Y|) \le \cR^g_{\alpha} (\beta |X| +(1-\beta) |Y|) \\
    & \le \beta \cR^g_{\alpha} (|X|) +  (1-\beta) \cR^g_{\alpha} (|Y|)\\
     & \le \beta \langle\!\langle \bm x \rangle\!\rangle_\alpha^{S,g}
        + (1-\beta) \langle\!\langle \bm y \rangle\!\rangle_\alpha^{S,g}\le 1,
\end{align*}
implying $\beta\bm x+ (1-\beta)\bm y\in \mathcal{U}_{S,g}$. This proves that $\mathcal{U}_{S,g}$ is a convex set.

Next, assume $g(u)$ is concave in $u$. A counterexample can be seen from Figure \ref{Fig-SGES-disk}(c) with $g(u)=u^{1/2}$ and $\alpha=0$.
\end{proof}

\section{Non-scaled generalized-ES norm}
\label{sect-3}

\begin{definition}
  The non-scaled generalized-ES norm $\langle\!\langle\cdot \rangle\!\rangle_\alpha^g$ in $\R^n$ with parameter $\alpha\in [0,1)$ and distortion function $g$ is defined by
  $$
     \langle\!\langle\bm x \rangle\!\rangle_\alpha^g =n (1-\alpha) \langle\!\langle\bm x \rangle\!\rangle_\alpha^{S,g}, \quad \bm x\in\R^n.
  $$
\end{definition}

Similar to Proposition \ref{pr-250406} for the scaled generalized-ES norm, we have the next proposition concerning alternative representation of the non-scaled generalized-ES norm. The proof is straightforward and, hence, omitted.

\begin{proposition} % \normalfont
 \label{pr-250407}
Let $g$ be a continuous distortion function, and denote $\alpha_j=g(j/n)$ for $j=0, \ldots, n$. Then
\begin{itemize}
 \item[{\rm (i)}] For $j=0, \ldots, n-1$,
   \begin{equation}
     \label{eq-250407-1}
      \langle\!\langle\bm x\rangle\!\rangle_{\alpha_j}^g= n \sum_{i=j+1}^n c_i |x|_{(i)},\quad \bm x\in\R^n,
   \end{equation}
   where $c_i=g(i/n)-g((i-1)/n)=\alpha_i-\alpha_{i-1}$ for each $i$.

 \item[{\rm (ii)}] For $\alpha$ such that $\alpha_j<\alpha<\alpha_{j+1}$, $j=0, \ldots, n-2$, $\langle\!\langle\bm x\rangle\! \rangle_{\alpha}^g$ equals the weighted average of $\langle\! \langle \bm x \rangle\!\rangle_{\alpha_j}^g$ and $\langle\!\langle\bm x\rangle\!\rangle_{\alpha_{j+1}}^g$:
     \begin{equation}
       \label{eq-250407-2}
       \langle\!\langle\bm x\rangle\!\rangle_{\alpha}^g=\lm\langle\!\langle\bm x\rangle\!\rangle_{\alpha_j}^g
        +(1-\lm)\langle\!\langle\bm x\rangle\!\rangle_{\alpha_{j+1}}^g,\quad \bm x\in\R^n,
     \end{equation}
    where $\lm=(\alpha_{j+1}-\alpha) / (\alpha_{j+1}-\alpha_j)$.

 \item[{\rm (iii)}] For $\alpha\in((n-1)/n,1)$, we have $\langle\!\langle \bm x \rangle\!\rangle_\alpha^g = n(1-\alpha) |x|_{(n)}$ for $\bm x\in\R^n$.
\end{itemize}
\end{proposition}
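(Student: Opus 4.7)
The plan is to reduce every claim to its scaled counterpart in Proposition \ref{pr-250406} via the defining relation $\langle\!\langle\bm x\rangle\!\rangle_\alpha^g = n(1-\alpha)\langle\!\langle\bm x\rangle\!\rangle_\alpha^{S,g}$. Since $\langle\!\langle\cdot\rangle\!\rangle_\alpha^g$ is simply $\langle\!\langle\cdot\rangle\!\rangle_\alpha^{S,g}$ rescaled by the $\alpha$-dependent factor $n(1-\alpha)$, all three parts should follow by direct substitution, with only a small amount of algebra needed to match the convex-combination coefficients in (ii).

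For part (i), I would set $\alpha = \alpha_j$ in the definition and invoke \eqref{eq-250402-1}; the factor $n(1-\alpha_j)$ multiplies $(1-\alpha_j)^{-1}\sum_{i=j+1}^n c_i|x|_{(i)}$ and the $(1-\alpha_j)$'s cancel, leaving $n\sum_{i=j+1}^n c_i|x|_{(i)}$. For part (iii), in the range where Proposition \ref{pr-250406}(iii) gives $\langle\!\langle\bm x\rangle\!\rangle_\alpha^{S,g} = |x|_{(n)}$, multiplying by $n(1-\alpha)$ yields $n(1-\alpha)|x|_{(n)}$ exactly as claimed. I would want to note in passing that the stated range $((n-1)/n, 1)$ here should be read as the range $(\alpha_{n-1},1)$ appearing in the scaled version, so the two are consistent.

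The only step requiring any work is part (ii). Starting from \eqref{eq-250402-2}, I would expand
\begin{equation*}
 \langle\!\langle\bm x\rangle\!\rangle_\alpha^g = n(1-\alpha)\mu\,\langle\!\langle\bm x\rangle\!\rangle_{\alpha_j}^{S,g} + n(1-\alpha)(1-\mu)\,\langle\!\langle\bm x\rangle\!\rangle_{\alpha_{j+1}}^{S,g},
\end{equation*}
and convert each scaled norm back into its non-scaled form: the first term becomes $[\mu(1-\alpha)/(1-\alpha_j)]\cdot \langle\!\langle\bm x\rangle\!\rangle_{\alpha_j}^g$ and the second becomes $[(1-\mu)(1-\alpha)/(1-\alpha_{j+1})]\cdot \langle\!\langle\bm x\rangle\!\rangle_{\alpha_{j+1}}^g$. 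It then remains to identify the first coefficient with $\lambda = (\alpha_{j+1}-\alpha)/(\alpha_{j+1}-\alpha_j)$, which is a direct substitution using $\mu = (\alpha_{j+1}-\alpha)(1-\alpha_j)/[(\alpha_{j+1}-\alpha_j)(1-\alpha)]$ from Proposition \ref{pr-250406}(ii). The companion identity for the second coefficient follows painlessly from the weight-preservation relation $1/(1-\alpha) = \mu/(1-\alpha_j) + (1-\mu)/(1-\alpha_{j+1})$: multiplying it by $(1-\alpha)$ gives $1 = \lambda + (1-\mu)(1-\alpha)/(1-\alpha_{j+1})$, so the second coefficient is automatically $1-\lambda$.

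The hard part will simply be the algebraic bookkeeping between $\mu$ and $\lambda$, but it is routine; there is no real obstacle, consistent with the authors' remark that the proof is straightforward.
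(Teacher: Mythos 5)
Your reduction to Proposition \ref{pr-250406} via $\langle\!\langle\bm x\rangle\!\rangle_\alpha^g = n(1-\alpha)\langle\!\langle\bm x\rangle\!\rangle_\alpha^{S,g}$ is precisely the straightforward argument the paper has in mind (the proof is omitted there for this reason), and your coefficient bookkeeping—$\mu(1-\alpha)/(1-\alpha_j)=\lambda$ together with the weight-preservation identity forcing the second coefficient to be $1-\lambda$—is correct. Your side remark on (iii) is also apt: the argument establishes the formula on $(\alpha_{n-1},1)$, which contains the stated range $((n-1)/n,1)$ when $g$ is convex, while for concave $g$ (cf.\ $g_3$ in Example \ref{ex-250407}) the stated range must indeed be read as $(\alpha_{n-1},1)$ for the claim to hold.
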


\begin{example} \normalfont
  \label{ex-250407}
  Consider the generalized-ES norms $\langle\!\langle \cdot\rangle\!\rangle_\alpha^{g_1}$, $\langle\!\langle \cdot\rangle\!\rangle_\alpha^{g_2}$ and $\langle\!\langle \cdot \rangle\!\rangle_\alpha^{g_3}$ in $\R^2$, where distortion functions $g_1(u)=u^2$, $g_2(u)=u$ and $g_3(u)=u^{1/2}$ for $u\in (0,1)$. From Example \ref{ex-250406}, we have
  \begin{figure}[htbp]
    \centering
     \subfigure[\ ]{ \includegraphics[scale=0.35]{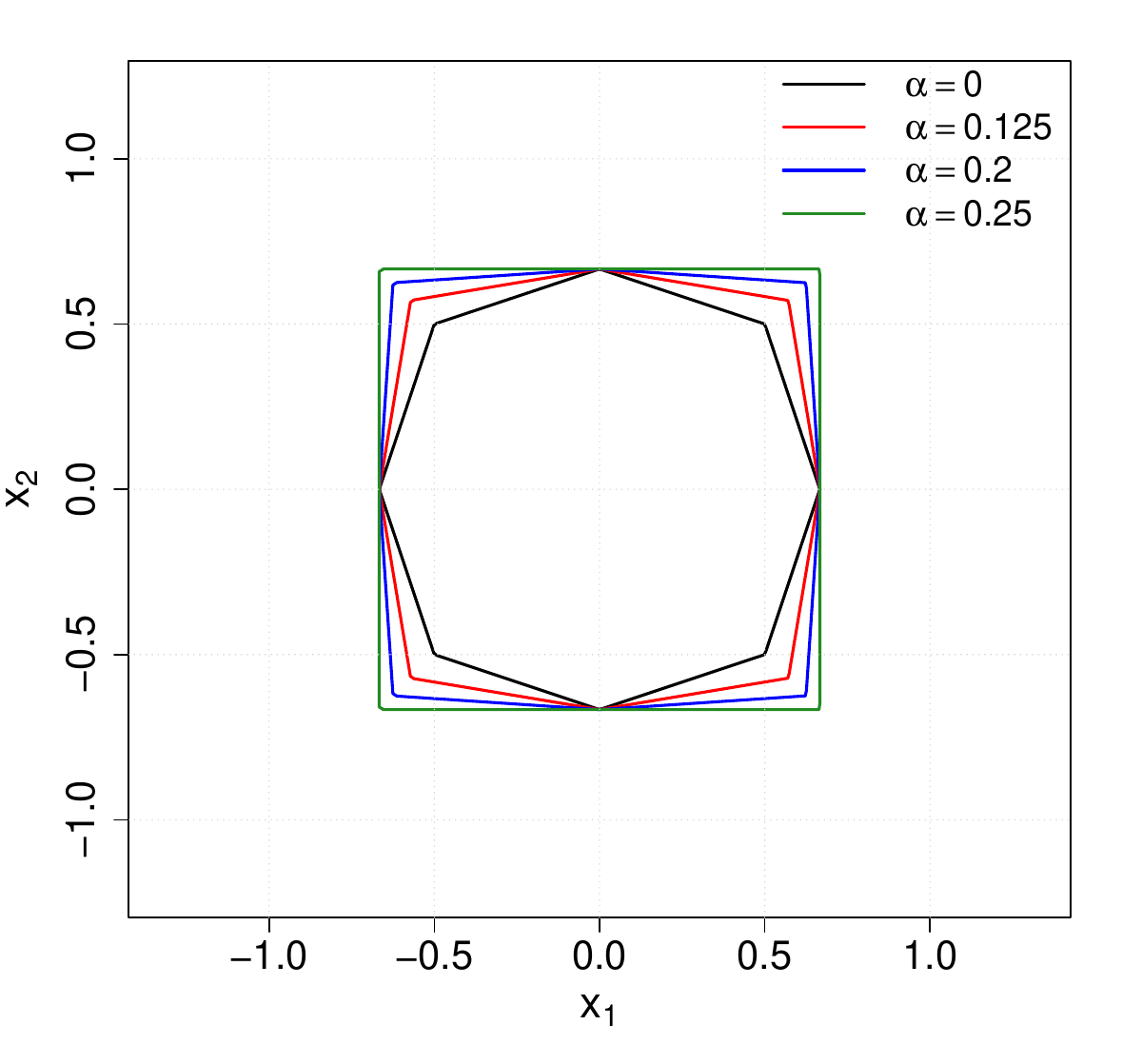} }
     \subfigure[\ ]{ \includegraphics[scale=0.35]{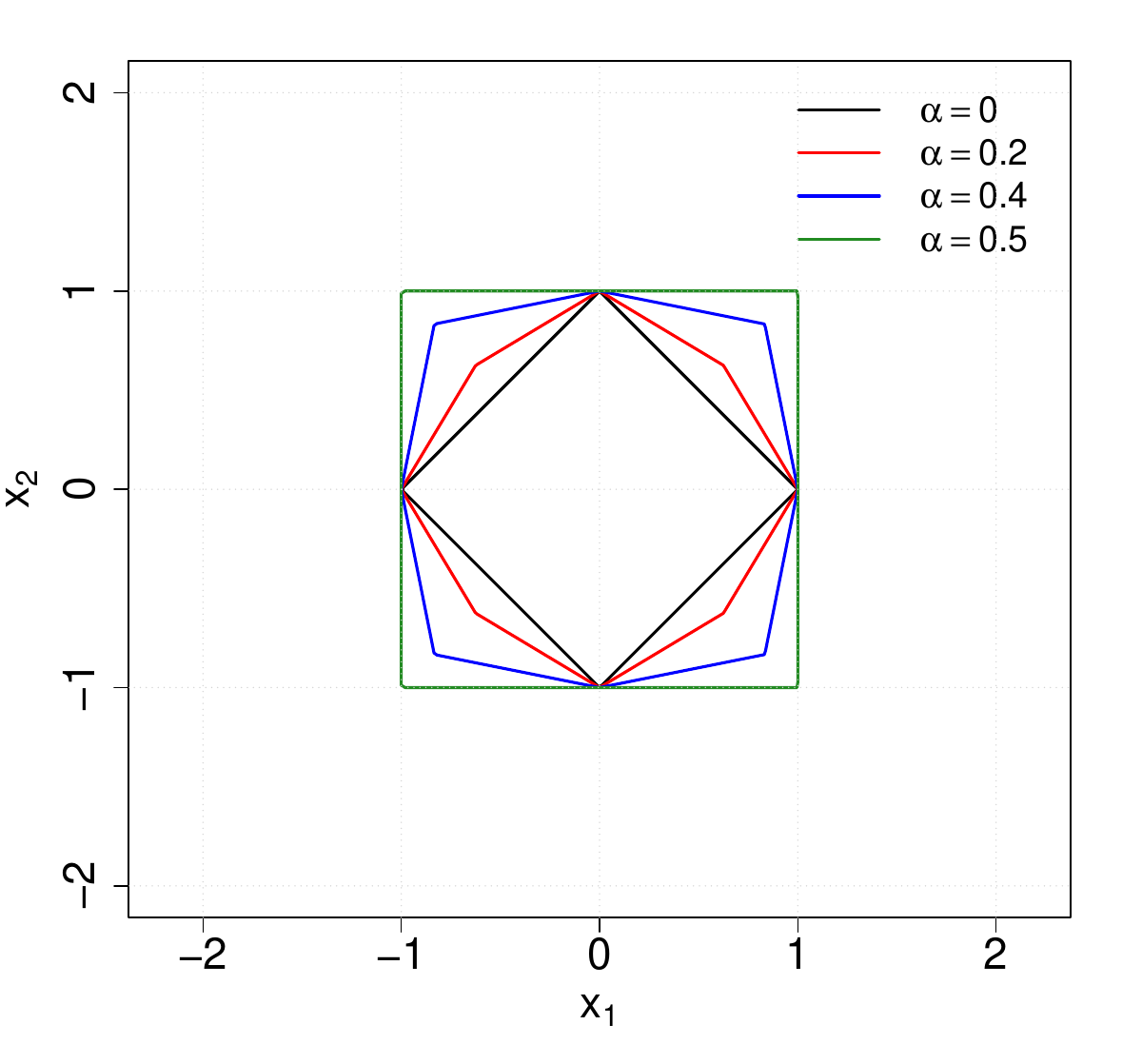} }
     \subfigure[\ ]{ \includegraphics[scale=0.35]{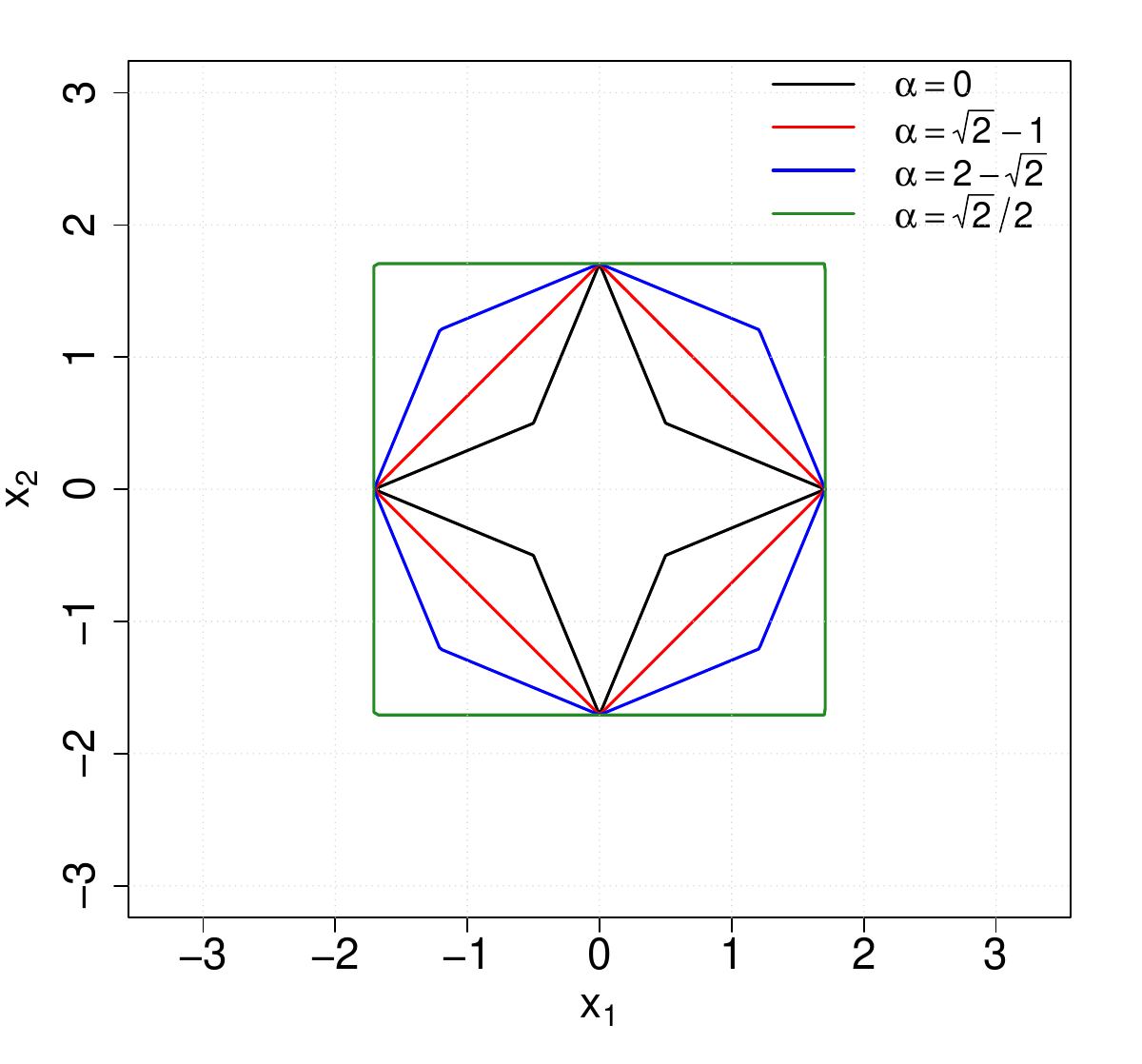} }
     \subfigure[\ ]{ \includegraphics[scale=0.35]{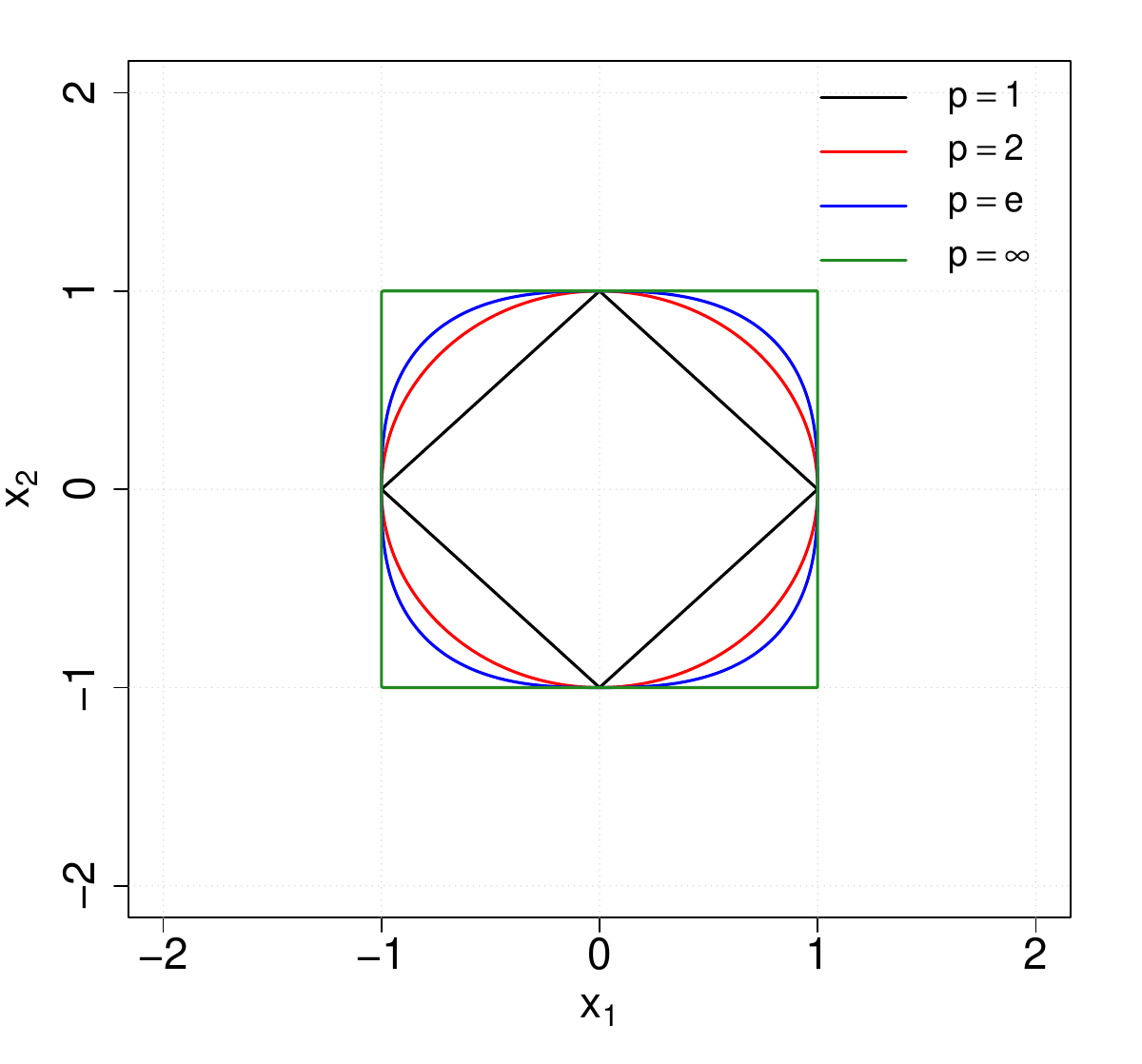} }
     \caption{(a) Unit disk $\mathcal{U}_{g_1}$ with $g_1(u) = u^2$ and $\alpha \in \{0, 0.125, 0.2, 0.25\}$; (b) Unit disk $\mathcal{U}_{g_2}$ with $g_2(u) = u$ and $\alpha \in \{0, 0.2, 0.4, 0.5\}$; (c) Unit disk $\mathcal{U}_{g_3}$ with $g_3(u) = \sqrt{u}$ and $\alpha \in\{0, \sqrt{2}-1, 2-\sqrt{2}, \sqrt{2}/2\}$; (d) Unit disk $\mathcal{U}_{\|\cdot\|_p}$ based on $L_p$-norm with $p\in \{1, 2, e, \oo\}$. }
   \label{Fig-NSGES-disk}
  \end{figure}
  \begin{align*}
    \langle\!\langle \bm x\rangle\!\rangle_\alpha^{g_1} &=\left \{\begin{array}{ll}
      \dfrac {1}{2} |x|_{(1)} +\dfrac {3}{2} |x|_{(2)}, & \alpha=0,\\[8pt]
     \dfrac {1}{2} (1-4\alpha) |x|_{(1)}+\dfrac {3}{2} |x|_{(2)}, & \alpha\in \(0, \dfrac {1}{4}\),\\[8pt]
          2(1-\alpha) |x|_{(2)}, & \alpha\in \[\dfrac {1}{4}, 1\];          \end{array} \right.\\[2pt]
   \langle\!\langle \bm x\rangle\!\rangle_\alpha^{S,g_2} &=\left \{\begin{array}{ll}
        |x_1| + |x_2|, & \alpha=0,\\[8pt]
        (1-2\alpha) |x|_{(1)}+ |x|_{(2)}, & \alpha\in \(0, \dfrac {1}{2}\),\\[8pt]
       2(1-\alpha) |x|_{(2)}, & \alpha\in \[\dfrac {1}{2}, 1\];          \end{array} \right.
  \end{align*}
  and
  \begin{align*}
     \langle\!\langle \bm x\rangle\!\rangle_\alpha^{g_3} &=\left \{\begin{array}{ll}
        \sqrt{2} |x|_{(1)} + (2-\sqrt{2}) |x|_{(2)}, & \alpha=0,\\[4pt]
     (\sqrt{2}-2\alpha) |x|_{(1)}+ (2-\sqrt{2}) |x|_{(2)}, & \alpha\in \(0, \dfrac {\sqrt{2}}{2}\),\\[10pt]
          2(1-\alpha) |x|_{(2)}, & \alpha\in \[\dfrac {\sqrt{2}}{2}, 1\].     \end{array} \right.
  \end{align*}

  Figure \ref{Fig-NSGES-disk} compares the unit disks for different non-scaled generalized-ES norms with different distortions functions. For $\bm x=(3,1,-4, 18,10)$, Figure \ref{Fig-GES-comparison} compares the scaled and non-scaled norms $\langle\!\langle\bm x\rangle\!\rangle_{\alpha}^{S,g_i}$ and $\langle\!\langle\bm x\rangle\!\rangle_{\alpha}^{g_i}$ numerically.
  \begin{figure}[htbp]
    \centering
    \subfigure[\ ]{ \includegraphics[scale=0.35]{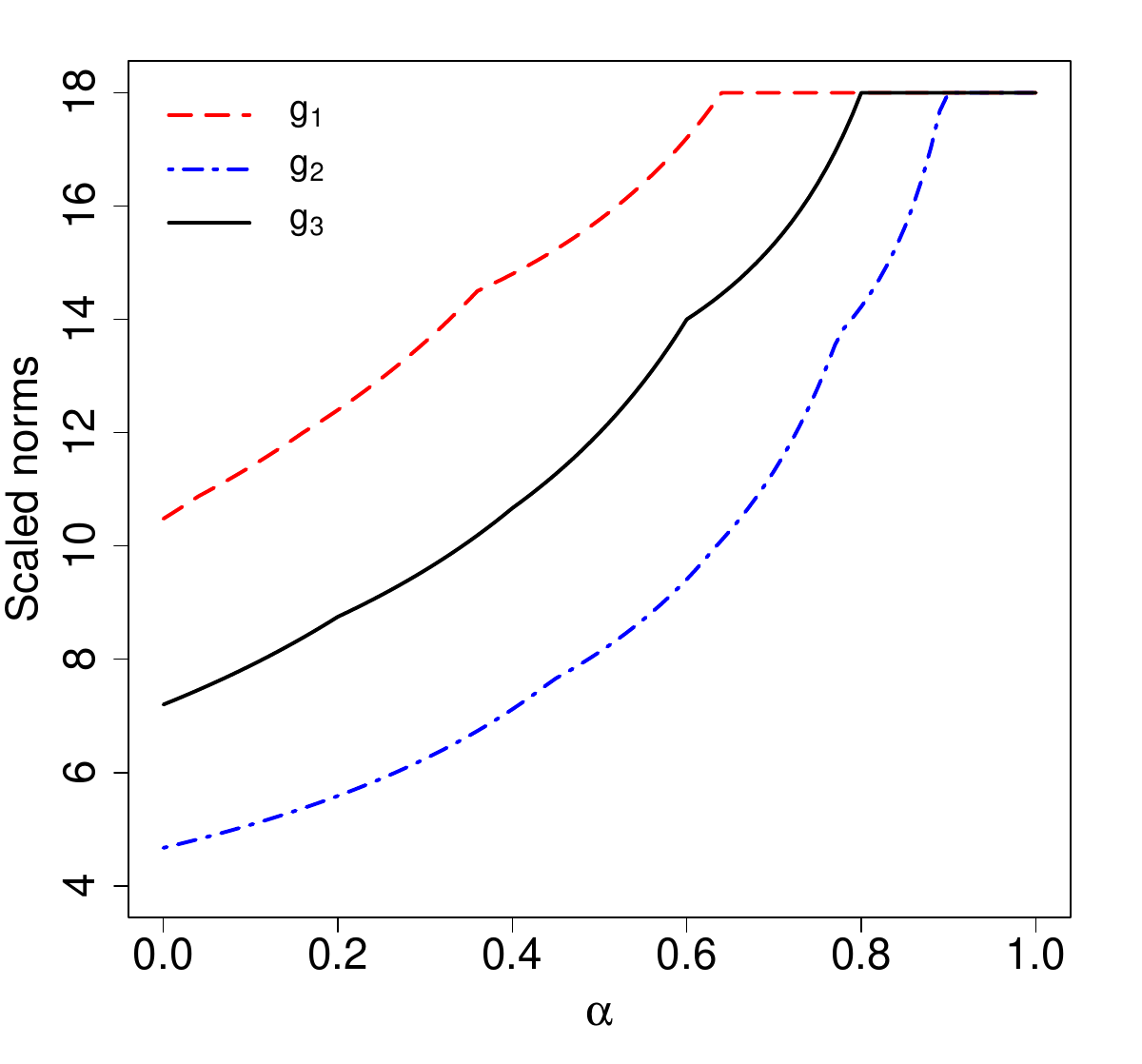} }
    \subfigure[\ ]{ \includegraphics[scale=0.35]{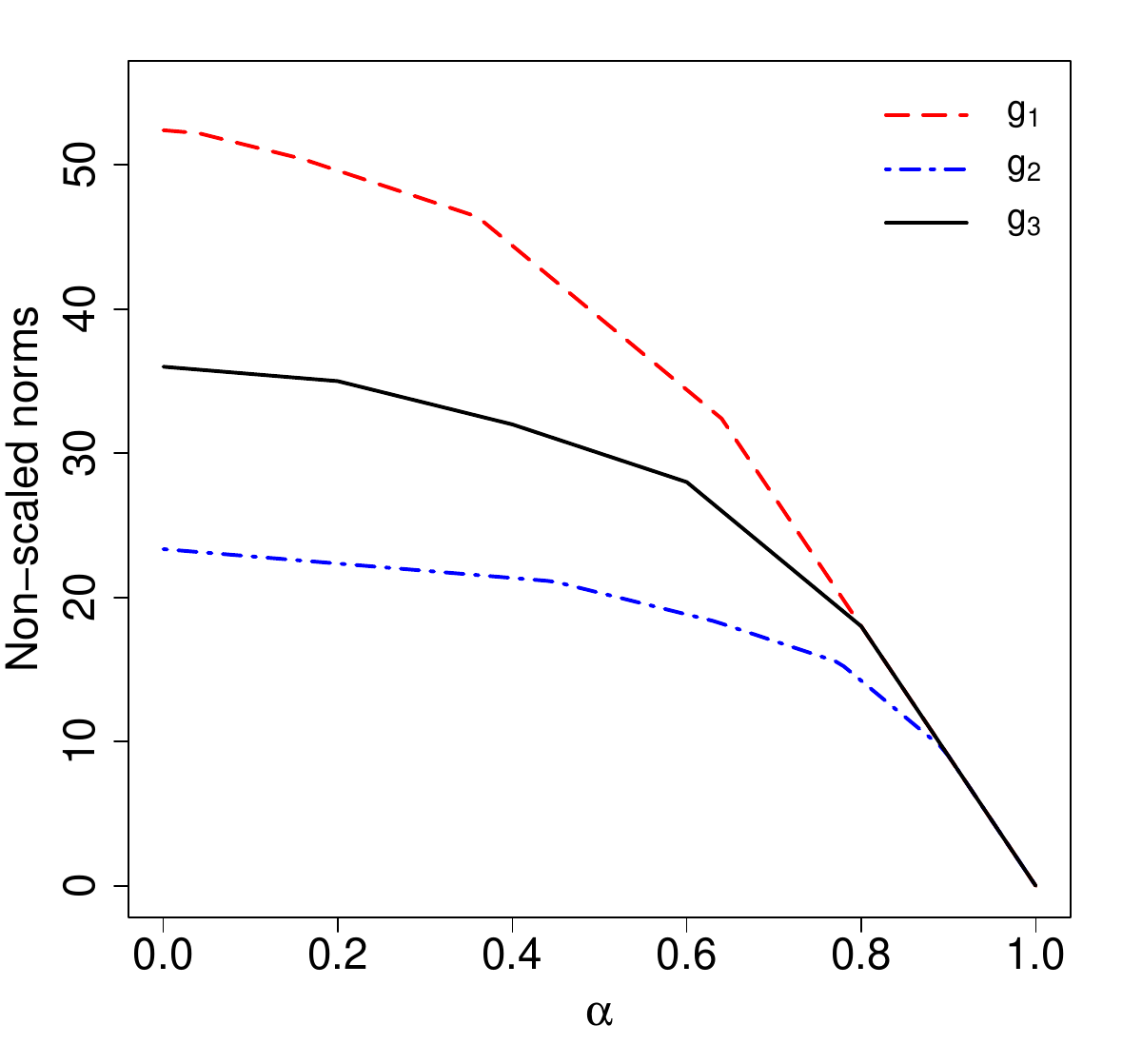} }
     \caption{(a) The graphs of scaled norms $\langle\!\langle\bm x\rangle\!\rangle_{\alpha}^{S,g_i}$ for $i=1,2,3$;  (b) The graphs of non-scaled norms $\langle\!\langle\bm x\rangle\!\rangle_{\alpha}^{g_i}$ for $i=1,2,3$. Here $\bm x=(3,1,-4, 18,10)$. }
    \label{Fig-GES-comparison}
  \end{figure}
\end{example}

\begin{proposition} % \normalfont
  \label{pr-250408}
  Let $g$ be a continuous distortion function. For $\bm x\in\R^n$, the non-scaled generalized-ES norm $\langle\!\langle \bm x \rangle\!\rangle_\alpha^g$ is decreasing and piecewise-linear in $\alpha$. Moveover, $\langle\!\langle \bm x \rangle\!\rangle_\alpha^g$ is concave in $\alpha$ if $g$ is convex.
\end{proposition}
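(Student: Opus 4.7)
My plan is to leverage the explicit piecewise description provided by Proposition~\ref{pr-250407}. Writing $A_j := \langle\!\langle\bm x\rangle\!\rangle_{\alpha_j}^g = n\sum_{i=j+1}^n c_i|x|_{(i)}$ for $j=0,\ldots,n-1$ and $A_n := 0$, that proposition shows that on each subinterval $[\alpha_j,\alpha_{j+1}]$ with $j\le n-2$ the norm equals the convex combination $\lambda A_j+(1-\lambda)A_{j+1}$ with $\lambda=(\alpha_{j+1}-\alpha)/(\alpha_{j+1}-\alpha_j)$, and on $(\alpha_{n-1},1]$ it equals $n(1-\alpha)|x|_{(n)}$. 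Both expressions are affine in $\alpha$, so piecewise linearity is immediate. Continuity at $\alpha=\alpha_{n-1}$ is easy to check, since $A_{n-1}=nc_n|x|_{(n)}=n(1-g((n-1)/n))|x|_{(n)}=n(1-\alpha_{n-1})|x|_{(n)}$.

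For the decreasing claim, the key observation is that the breakpoint values $A_j$ are decreasing in $j$: going from $A_j$ to $A_{j+1}$ drops the non-negative summand $nc_{j+1}|x|_{(j+1)}$. Hence each affine piece interpolates between the larger value $A_j$ and the smaller $A_{j+1}$, and the terminal piece decreases linearly to $A_n=0$; combined with continuity at every break point, the whole function is decreasing on $[0,1]$.

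For concavity under convex $g$, I would compute the slope on each affine piece. On $(\alpha_j,\alpha_{j+1})$ the slope equals $(A_{j+1}-A_j)/(\alpha_{j+1}-\alpha_j) = -nc_{j+1}|x|_{(j+1)}/c_{j+1} = -n|x|_{(j+1)}$, and on $(\alpha_{n-1},1)$ it equals $-n|x|_{(n)}$. Since $|x|_{(1)}\le|x|_{(2)}\le\cdots\le|x|_{(n)}$, these slopes form a non-increasing sequence in the order they appear along $[0,1]$; combined with continuity, this makes the function concave.

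The main obstacle I anticipate is handling potential degeneracies: if $g$ has flat stretches, some $c_{j+1}$ may vanish, collapsing $[\alpha_j,\alpha_{j+1}]$ to a point and rendering the slope expression $0/0$ indeterminate. Convexity of $g$ guarantees $c_1\le c_2\le\cdots\le c_n$, so that any vanishing $c_j$'s are confined to an initial segment of the index set. A clean remedy is to restrict attention to the indices $j$ with $c_{j+1}>0$; the surviving pieces still have slopes $-n|x|_{(j+1)}$ arranged in the same monotone order, so the concavity argument goes through unchanged on the resulting non-degenerate partition of $[0,1]$.
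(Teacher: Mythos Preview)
Your treatment of piecewise linearity and the decreasing property coincides with the paper's: both invoke Proposition~\ref{pr-250407} to obtain the knot values $A_j=n\sum_{i>j}c_i|x|_{(i)}$ together with the affine interpolation, and both note $A_j\ge A_{j+1}$ because a non-negative summand is dropped.

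For concavity the approaches diverge. The paper verifies the midpoint inequality $A_j\ge\tfrac12(A_{j-1}+A_{j+1})$, which reduces to $c_j|x|_{(j)}\le c_{j+1}|x|_{(j+1)}$ and uses convexity of $g$ through $c_j\le c_{j+1}$. You instead compute the slope on each affine piece directly as
\[
\frac{A_{j+1}-A_j}{\alpha_{j+1}-\alpha_j}=-\frac{n c_{j+1}|x|_{(j+1)}}{c_{j+1}}=-n\,|x|_{(j+1)},
\]
and observe that these are non-increasing in $j$. Your route is the sharper one: non-increasing left-to-right slopes is exactly the concavity criterion for a continuous piecewise-linear function, whereas the midpoint inequality at the knots is in general \emph{not} sufficient when the $\alpha_j$ are unequally spaced (take knots $0,\,0.9,\,1$ with values $0,\,6,\,10$: the sequence is midpoint-concave but the interpolant is convex). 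A further dividend of your computation is that it shows the convexity hypothesis on $g$ is actually inessential for the concavity-in-$\alpha$ claim---the slopes $-n|x|_{(j+1)}$ are non-increasing regardless of how the $c_j$ are ordered. Consequently your closing paragraph, while correct, works harder than necessary: degenerate intervals with $c_{j+1}=0$ simply vanish from the partition, and the surviving slopes remain a subsequence of the non-increasing sequence $(-n|x|_{(k)})_{k=1}^n$, so concavity persists without appealing to convexity of $g$ at all.
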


\begin{proof}
Let $\alpha_j$ and $c_j$ be as defined in Proposition \ref{pr-250407}. For $j=0,1,\ldots, n-1$, we have
$$
  \langle\!\langle \bm x \rangle\!\rangle_{\alpha_j}^g = n \sum_{i=j+1}^n c_i |x|_{(i)}
      \ge n \sum_{i=j+2}^n c_i |x|_{(i)} =\langle\!\langle \bm x \rangle\!\rangle_{\alpha_{j+1}}^g,\quad \bm x\in\R^n.
$$
By Proposition \ref{pr-250407} (ii), for $\alpha_j<\alpha<\alpha_{j+1}$, $\langle\!\langle\bm x\rangle\! \rangle_{\alpha}^g$ is a convex combination of $\langle\! \langle \bm x \rangle\!\rangle_{\alpha_j}^g$ and $\langle\!\langle\bm x\rangle\!\rangle_{\alpha_{j+1}}^g$ with $\lm=(\alpha_{j+1}-\alpha) / (\alpha_{j+1}-\alpha_j)$, linear with respect to $\alpha$. Thus, $\langle\!\langle \bm x \rangle\!\rangle_\alpha^g$ is decreasing and piecewise-linear in $\alpha$.

To prove the concavity of $\langle\!\langle \bm x \rangle\!\rangle_\alpha^g$ in $\alpha$ when $g$ is convex, it suffices to show that
$$
   \langle\!\langle \bm x \rangle\!\rangle_{\alpha_j}^g\ge \frac {1}{2} \(\langle\!\langle \bm x \rangle\!\rangle_{\alpha_{j-1}}^g + \langle\!\langle \bm x \rangle\!\rangle_{\alpha_{j+1}}^g\),\quad j=1, \ldots, n-1.
$$
In fact, we have
\begin{align*}
  \langle\!\langle \bm x \rangle\!\rangle_{\alpha_{j-1}}^g + \langle\!\langle \bm x \rangle\!\rangle_{\alpha_{j+1}}^g
   &=n \sum_{i=j}^n c_i |x|_{(i)} + n \sum_{i=j+2}^n c_i |x|_{(i)}\\
   & = n\[c_j |x|_{(j)} - c_{j+1} |x|_{(j+1)}\] + 2 n \sum_{i=j+1}^n c_i |x|_{(i)}\\
   & \le 2 n \sum_{i=j+1}^n c_i |x|_{(i)} =2  \langle\!\langle \bm x \rangle\!\rangle_{\alpha_j}^g,
\end{align*}
where the inequality follows since the convexity of $g$ implies $c_j \le c_{j+1}$. This prove the proposition.
\end{proof}

\section{Dual norm of the scaled generalized-ES norm}
\label{sect-4}

The dual norm is a fundamental concept in functional analysis that characterizes the structure of normed spaces. Given a norm $\|\cdot\|$ on $\R^n$, its dual norm $\|\cdot\|^\ast$ is defined as
\begin{equation*}
    \|\bm y\|^\ast =\sup_{\|\bm x\|\leqslant 1} \bm x^\top\bm y, \qquad \bm y\in \R^n.
\end{equation*}

For instance, when $p,\, q>1$ such that $1/p+1/q=1$, the dual norm of $L_p$-norm is the $L_q$-norm. To investigate the dual norm of scaled generalized-ES norm, we recall the concept of ordered weighted $L_1$ (OWL) norm introduced by \cite{ZF15}. Denoted by $\Omega_\mathbf{w}: \R^n\rightarrow \R_+$ the OWL norm, defined by
\begin{equation}
    \Omega_{\bm w}(\bm x) =\sum_{i=1}^n w_i|x|_{(i)},
\end{equation}
where $\bm w =(w_1, \ldots, w_n)\in\R^n$ satisfies $0\leqslant w_1\leqslant\ldots\leqslant w_n$. The next lemma gives the dual norm of the OWL norm.

\begin{lemma} {\rm \citep[Lemma 1]{ZF15}}
\label{dual-lemma}
 The dual norm of $\Omega_{\bm w}(x)$ is given by
 \begin{equation*}
     \Omega_{\bm w}^\ast (\bm y) = \max\left\{\tau_i \sum_{k=i}^n |y|_{(k)},\ \ i=1, \ldots, n\right\},\qquad \bm y\in\R^n,
 \end{equation*}
 where $\tau_i =\(\sum_{k=i}^n w_i\)^{-1}$ for $i=1, \ldots, n$.
\end{lemma}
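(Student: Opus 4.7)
The plan is to realize $\Omega_{\bm w}^\ast(\bm y)$ as the optimum of a linear program and then to solve that program by an elementary vertex argument. First, I would use the two built-in invariances of $\Omega_{\bm w}$: it depends only on the multiset of $|x_i|$, and within that it only sees the sorted absolute values. Fix $\bm y\in\R^n$. Since $\Omega_{\bm w}(\bm x)=\Omega_{\bm w}(|\bm x|)$ and $\bm x^\top\bm y\le\sum_i|x_i||y_i|$, I may assume $x_i=|x_i|\,\mathrm{sgn}(y_i)$, reducing to non-negative components. The rearrangement inequality then shows that for any assignment of the sorted values $u_1\le\cdots\le u_n$ of $|\bm x|$ to positions, the maximum of $\bm x^\top\bm y$ equals $\sum_{i=1}^n u_i\,|y|_{(i)}$. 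Hence
\[
   \Omega_{\bm w}^\ast(\bm y) \;=\; \sup\left\{\sum_{i=1}^n u_i\,|y|_{(i)} \;:\; 0\le u_1\le u_2\le\cdots\le u_n,\ \sum_{i=1}^n w_i u_i\le 1\right\}.
\]

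Next I would reparametrize by the forward differences $v_k:=u_k-u_{k-1}$ (with $u_0:=0$), which turns the ordering constraint into pure non-negativity of the $v_k$. A short Abel summation gives
\[
   \sum_{i=1}^n u_i\,|y|_{(i)} \;=\; \sum_{k=1}^n v_k\,B_k, \qquad
   \sum_{i=1}^n w_i u_i \;=\; \sum_{k=1}^n v_k\,W_k,
\]
where $B_k:=\sum_{i=k}^n|y|_{(i)}$ and $W_k:=\sum_{i=k}^n w_i=1/\tau_k$. The dual-norm problem is therefore the simple LP of maximizing $\sum_k v_k B_k$ over $v_k\ge 0$ with $\sum_k v_k/\tau_k\le 1$.

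Finally I would observe that this LP is the problem of maximizing a linear functional over a simplex-like set, so by an exchange (or LP-duality) argument the optimum is attained at a vertex with a single coordinate $v_{k^\ast}=\tau_{k^\ast}$, giving value $\tau_{k^\ast} B_{k^\ast}$, and the best choice of $k^\ast$ is the one maximizing $\tau_k B_k=\tau_k\sum_{i=k}^n|y|_{(i)}$. This is exactly the stated formula. The main point to verify carefully is that the single-spike vertex is admissible, i.e.\ that setting $v_1=\cdots=v_{k^\ast-1}=0$ and $v_{k^\ast+1}=\cdots=v_n=0$ indeed yields $0\le u_1\le\cdots\le u_n$; this is immediate because all $u_i$ with $i<k^\ast$ vanish while $u_{k^\ast}=\cdots=u_n=v_{k^\ast}\ge 0$. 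The only place where monotonicity of $(w_i)$ is needed is in guaranteeing $W_k>0$ so that $\tau_k$ is well defined (which holds whenever $w_n>0$, the non-degenerate case); the rearrangement reduction itself does not use $w_1\le\cdots\le w_n$.
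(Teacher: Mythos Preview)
The paper does not supply its own proof of this lemma; it is quoted from \cite{ZF15} and then invoked as a black box to obtain Proposition~\ref{pr-250510}. There is therefore nothing in the paper to compare your argument against.

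Your argument is correct and self-contained: the sign and permutation reduction via the rearrangement inequality, the Abel-summation change of variables $u_k\mapsto v_k=u_k-u_{k-1}$, and the observation that the resulting LP maximizes a linear functional over the simplex $\{v\ge 0:\sum_k W_k v_k\le 1\}$ (hence is optimized at a spike $v=\tau_{k^\ast} e_{k^\ast}$) together yield exactly the stated formula. One minor quibble about your closing remark: positivity of every $W_k=\sum_{i\ge k}w_i$ already follows from $w_i\ge 0$ and $w_n>0$, with no ordering assumption needed; what the monotonicity $w_1\le\cdots\le w_n$ actually buys is subadditivity of $\Omega_{\bm w}$, i.e.\ that it is a norm at all, so that $\Omega_{\bm w}^\ast$ is legitimately a dual \emph{norm} rather than merely the support function of the sublevel set $\{\Omega_{\bm w}\le 1\}$.
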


As observed, the scaled generalized-ES norm $\langle\!\langle\cdot\rangle\!\rangle_\alpha^{S,g}$,  constructed using a convex distortion function $g$, can actually be interpreted as a special case of the OWL norm. Let $\{\alpha_j\}$ and $\{c_j\}$ be as defined in Proposition \ref{pr-250407}. Specifically, for $\alpha\in [\alpha_j,\alpha_{j+1}],$ with $ j=0,\ldots, n-2$, the scaled generalized-ES norm $\langle\!\langle\bm x\rangle\!\rangle_\alpha^{S,g}$ can be rewritten as
\begin{equation*}
  \begin{aligned}
    \langle\!\langle\bm x\rangle\!\rangle_\alpha^{S,g}
    & = \mu\langle\!\langle\bm x \rangle\!\rangle_{\alpha_j}^{S,g}
             +(1-\mu)\langle\!\langle\bm x\rangle\!\rangle_{\alpha_{j+1}}^{S,g}\\
    & = \frac{\alpha_{j+1}-\alpha}{(\alpha_{j+1}-\alpha_j)(1-\alpha)}\sum_{i=j+1}^n c_i|x|_{(i)}
            +\frac{\alpha-\alpha_j}{(\alpha_{j+1}-\alpha_j)(1-\alpha)}\sum_{i=j+2}^n c_i|x|_{(i)}\\
    & =  \frac{\alpha_{j+1}-\alpha}{(\alpha_{j+1}-\alpha_{j})(1-\alpha)}c_{j+1}|x|_{(j+1)}
          +\frac{1}{1-\alpha}\sum_{i=j+2}^n c_i|x|_{(i)}\\
    & = \frac{\alpha_{j+1}-\alpha}{1-\alpha}|x|_{(j+1)}+\frac{1}{1-\alpha}\sum_{i=j+2}^n (\alpha_i-\alpha_{i-1})|x|_{(i)}.
  \end{aligned}
\end{equation*}
When expressed in the framework of the OWL norm, the ordered weights corresponding to the scaled generalized-ES norm $\langle\!\langle\bm x\rangle\!\rangle_\alpha^{S,g}$ are as follows:
\begin{equation*}
    w_1=\cdots=w_j=0,\quad w_{j+1}=\frac{\alpha_{j+1}-\alpha}{1-\alpha},\quad w_{i}=\frac{\alpha_i-\alpha_{i-1}}{1-\alpha},\quad i=j+2,\ldots,n.
\end{equation*}
This construction ensures that $\langle\!\langle\bm x\rangle\!\rangle_\alpha^{S,g}$ inherits the structure of an OWL norm with a specific set of distortion-induced weights. Notably,
$$
     c_i=\alpha_{i}-\alpha_{i-1}=g\left(\frac{i}{n}\right)-g\left(\frac{i-1}{n}\right),\quad i=1,\ldots,n,
$$
are non-negative and increasing in $i$, which follows directly from the convexity of $g$. As a result, the corresponding weights $w_i$ satisfy the non-decreasing requirement of the OWL norm.

\begin{proposition}
 \label{pr-250510}
Let $g$ be a convex distortion function, and denote $\alpha_j=g(j/n)$ for $j=0, \ldots, n$.
\begin{itemize}
 \item[{\rm (i)}] For $\alpha\in[\alpha_{j},\alpha_{j+1}]$ with $ j=0,\ldots, n-2$, the dual norm of $\langle\!\langle\mathbf{x}\rangle\!\rangle_\alpha^{S,g}$  is given by
\begin{equation}
 \label{eq-250507}
       \langle\!\langle \bm x \rangle\!\rangle _\alpha^{S,g,\ast} = \max\left\{\|\bm x\|_1, \ \frac{1-\alpha}{1-g\left(\frac{i-1}{n}\right)} \sum_{k=i}^n |x|_{(k)}, \ i= j+2,\ldots, n\right\},\quad \bm x\in\R^n.
\end{equation}
 \item[{\rm (ii)}] For $\alpha\in (\alpha_{n-1}, 1)$, we have
\begin{equation}
 \label{eq-250508}
       \langle\!\langle \bm x \rangle\!\rangle _\alpha^{S,g,\ast} = \|\bm x\|_1,\quad \bm x\in\R^n.
\end{equation}
\end{itemize}
\end{proposition}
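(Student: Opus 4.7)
The plan is to leverage the key observation, displayed in the paragraph immediately preceding the proposition, that $\langle\!\langle\cdot\rangle\!\rangle_\alpha^{S,g}$ coincides with an OWL norm $\Omega_{\bm w}$ for an explicit weight vector $\bm w=(w_1,\ldots,w_n)$. This reduces Part (i) to checking that the weights satisfy the hypotheses of Lemma \ref{dual-lemma} and then carrying out the computation of the $\tau_i$ prescribed by that lemma. Part (ii) will follow from the degeneration of the norm to $L^\infty$.

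The first step is to verify that the weights
\begin{equation*}
   w_1=\cdots=w_j=0,\qquad w_{j+1}=\frac{\alpha_{j+1}-\alpha}{1-\alpha},\qquad w_i=\frac{\alpha_i-\alpha_{i-1}}{1-\alpha}\ \ (i\ge j+2),
\end{equation*}
are non-negative and non-decreasing. Non-negativity is immediate from $\alpha\le \alpha_{j+1}$. The only non-routine inequality is $w_{j+1}\le w_{j+2}$: since $\alpha\ge \alpha_j$ we have $\alpha_{j+1}-\alpha\le c_{j+1}$, and the convexity of $g$ gives $c_{j+1}\le c_{j+2}$, so $w_{j+1}\le w_{j+2}$. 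The remaining inequalities $w_i\le w_{i+1}$ for $i\ge j+2$ reduce directly to $c_i\le c_{i+1}$ and again follow from convexity. This is precisely where the convexity hypothesis on $g$ is essential.

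The second step is the direct application of Lemma \ref{dual-lemma}. Since $w_1=\cdots=w_j=0$ and the remaining weights telescope, I would compute
\begin{equation*}
   \sum_{k=i}^n w_k=1\ \ (i\le j+1),\qquad \sum_{k=i}^n w_k=\frac{1-g((i-1)/n)}{1-\alpha}\ \ (i\ge j+2),
\end{equation*}
so $\tau_i=1$ for $i\le j+1$ and $\tau_i=(1-\alpha)/(1-g((i-1)/n))$ for $i\ge j+2$. The first family of indices contributes $\max_{1\le i\le j+1}\sum_{k=i}^n|x|_{(k)}=\|\bm x\|_1$ (attained at $i=1$), and the second family yields the remaining candidates in the stated maximum. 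Combining the two gives formula \eqref{eq-250507}.

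For Part (ii), Proposition \ref{pr-250406}(iii) identifies $\langle\!\langle\cdot\rangle\!\rangle_\alpha^{S,g}$ with the $L^\infty$-norm when $\alpha\in(\alpha_{n-1},1)$, whose dual is the $L^1$-norm; equivalently, one can view this as the OWL norm with weights $(0,\ldots,0,1)$ and apply Lemma \ref{dual-lemma} to reach the same conclusion. The main obstacle in the whole argument is the careful bookkeeping at the index $i=j+1$, where $w_{j+1}$ has a hybrid form and is the unique place where OWL monotonicity must be checked directly rather than inherited from convexity of $g$ through the sequence $(c_i)$; once this is handled, the telescoping and the reduction to Lemma \ref{dual-lemma} are essentially mechanical.
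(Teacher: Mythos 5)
Your proof is correct and takes essentially the same route as the paper: it identifies $\langle\!\langle\cdot\rangle\!\rangle_\alpha^{S,g}$ with the OWL norm through the weights displayed before the proposition, computes the telescoping sums to obtain $\tau_i=1$ for $i\le j+1$ and $\tau_i=(1-\alpha)/(1-g((i-1)/n))$ for $i\ge j+2$, absorbs the first group into $\|\bm x\|_1$, and invokes Lemma \ref{dual-lemma}. Your explicit verification of $w_{j+1}\le w_{j+2}$ (using $\alpha\ge\alpha_j$ together with convexity) and your handling of part (ii) via the $L^\infty$--$L^1$ duality are slightly more detailed than the paper's proof, which leaves these points implicit.
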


\begin{proof}
Note that
\begin{equation*}
   \tau_i = \left(\sum_{k=i}^n w_i\right)^{-1} = \begin{cases}    1, & i=1, \ldots, j+1,\\
            \frac{1-\alpha}{1-g\left(\frac{i-1}{n}\right)}, &  i=j+2, \ldots, n,
        \end{cases}
\end{equation*}
and that, for $i=1,\ldots, j+1$,
$$
   \tau_i\sum_{k=i}^n |x|_{(k)}=\sum_{k=i}^{n}|x|_{(k)}\leqslant \sum_{k=1}^n |x|_{(k)}=\tau_1\sum_{k=1}^n |x|_{(k)}= \|\bm x\|_1.
$$
Thus, the desired result follows from Lemma \ref{dual-lemma}.
\end{proof}

In particular, when the distortion function is $g_2(u)=u$, the dual norm of the scaled ES-norm $\langle\!\langle\bm x\rangle\!\rangle_\alpha^S$ is
\begin{equation}
   \label{dual-ES-norm}
    \langle\!\langle \bm x \rangle\!\rangle _\alpha^{S,\ast}
    = \max\left\{ \|\bm x \|_1,\ n(1-\alpha) \|\bm x\|_\infty\right\}
\end{equation}
for $\alpha\in \left[0, \frac{n-1}{n}\right]$. In fact, in this case, for $i=1, \ldots, n-1$,
$$
   \frac{1-\alpha}{1-g_2\left(\frac{i-1}{n}\right)} \sum_{k=i}^n |x|_{(k)} = \frac{n(1-\alpha)}{n-i+1} \sum_{k=i}^{n} |x|_{(k)} \leqslant n(1-\alpha)|x|_{(n)}= n(1-\alpha)||\mathbf{x}||_\infty.
$$
Thus, \eqref{dual-ES-norm} follows from \eqref{eq-250507}.
Therefore, the dual norm of the non-scaled ES-norm is
$$\langle\!\langle\mathbf{x}\rangle\!\rangle_\alpha^* = \max\left\{||\mathbf{x}||_\infty, \frac{||\mathbf{x}||_1}{n(1-\alpha)}\right\},$$
which is in concordance with Exercise IV.1.18 in \cite{B13} with $\alpha=\alpha_j$ for $j=0, \ldots, n-1$. The max operator in the dual norm expression (\ref{dual-ES-norm}) reveals two distinct regimes. First, when the vector entries are relatively balanced in magnitude, the dual norm coincides with the $L_1$-norm. Second, if the largest magnitude entry $|x|_{(n)}$ dominates all others, it solely determines the dual norm. This regime is particularly relevant in risk analysis, where a single catastrophic loss $|x|_{(n)}$ may exceed the aggregated impact of all other losses.

\section{Applications}
\label{sect-5}

\subsection{Projection problem}

We consider a projection problem under the scaled generalized-ES norm with the distortion function $g_1(u)=u^2$, and compare the objective values with those obtained using the scaled ES-norm. Let $P\subset\R^n$ be a convex polyhedron, and let $\bm w\notin P$. The goal to find a point $\bm w_P\in P$ that minimizes the distance between $\bm w$ and $\bm w_P$ induced by the norm $\langle\!\langle\cdot\rangle\!\rangle_\alpha^{S, g_1}$. This projection can be efficiently computed using convex or linear programming techniques.

Let $\bm w=(w_1, \ldots, w_n)\in \R^n$ and let the convex polyhedron $P$ be defined by a set of $m$ linear inequalities corresponding to $m$ hyperplanes. Each hyperplane $j$ is described by a vector $(a_1^j, \ldots, a_n^j, b^j)$, $j=1, \ldots, m$. Introducing the matrix $\bm A =(a_i^j)$ with $i=1,\ldots, n$ and $j=1,\ldots, m$, and the vector $\bm b^\top =(b^1, \ldots, b^m)$, the polyhedron $P$ is compactly expressed as the feasible set defined by the linear constraints: $\bm A \bm x \leqslant \bm b$, $\bm x \geqslant \bm 0$. The problem of projecting a point $\bm w$ onto $P$ under the scaled generalized-ES norm with $g_1$ is formulated as follows:
\begin{equation}\label{projection-primary}
    \begin{aligned}
    \min_{\bm x}\quad &\langle\!\langle\bm w-\bm x \rangle\!\rangle_\alpha^{S,g_1}\\
    \text{s.t.}\quad & \bm A \bm x \leqslant \bm b,\\
    &\quad \bm x \geqslant \bm 0.
    \end{aligned}
\end{equation}
By \eqref{eq-250506}, problem (\ref{projection-primary}) can be reformulated as:
\begin{equation}\label{order-constraint}
    \begin{aligned}
    \min_{x_1,\ldots,x_n,t}\quad & t+\frac{1}{1-\alpha}\sum_{i=1}^{n} c_{i}z_{i}\\
    \text{s.t.}\quad & z_{i}\geqslant |x-w|_{(i)} -t,\quad i=1,\ldots,n,\\
    &z_{i}\geqslant0,\ x_i\geqslant0, \quad i=1,\ldots,n,\\
    &\sum_{i=1}^n a_{i}^j  x_i\leqslant b^j, \quad j=1,\ldots, m.
    \end{aligned}
\end{equation}
The challenge lies in the order-statistic constraint in \eqref{order-constraint}, which is nonlinear and non-convex due to the sorting operation. To address this, we employ a mixed-integer linear programming (MILP) approach by introducing a binary matrix $\bm U = (u_{ik})\in\{0,1\}^{n\times n}$ that explicitly models the ordering of $|\bm x-\bm w|$. The absolute value term is linearized via auxiliary variables $\bm y= (y_1,\ldots, y_n)$ such that $y_i\geqslant x_i-w_i$ and $y_i\geqslant w_i-x_i$ for each $i$. And let $\bm s =(s_1, \ldots, s_n)$ be the order statistics of $\bm y$. The projection problem is then reformulated as the following mixed-integer linear problem:
\begin{equation}\label{milp}
 \begin{aligned}
   \min_{\bm x, \bm y, \bm s, \bm z, t, \bm U} \quad & t + \frac{1}{1 - \alpha} \sum_{i=1}^n c_i z_i \\
   \text{s.t.} \quad
      & \sum_{i=1}^n a_{i}^j  x_i\leqslant b^j, \quad j=1,\ldots,m, \\
      & x_i \geqslant 0, z_i \geqslant s_i - t,\quad z_i \geqslant 0, \quad i=1,\ldots,n,\\
      & y_i \geqslant x_i - w_i,\quad y_i \geqslant w_i - x_i, \quad i = 1,\ldots,n, \\
      & \sum_{k=1}^{n} u_{ik} = 1, \quad i = 1,\ldots,n, \\
      &\sum_{i=1}^{n} u_{ik} = 1, \quad  k = 1,\ldots,n, \\
      & s_i = \sum_{k=1}^n u_{ik} y_k, \quad  i = 1,\ldots,n, \\
      & s_i \leqslant s_{i+1}, \quad  i = 1,\ldots,n-1, \\
      & u_{ik} \in \{0, 1\}, \quad  i,k = 1,\ldots,n.
  \end{aligned}
\end{equation}

For the numerical experiment, we randomly generate the matrix $\bm A\in\R^{m\times n}$ and the vector $\bm b\in\R^m$. Specifically, the entries of $\bm A$ and a reference vector $\bm x^\ast\in \R^m$ are independently drawn from the uniform distribution on $[0, 1]$. We then construct $\bm b=\bm A\bm x^\ast +1/10$ to ensure that the feasible polyhedron $P =\{\bm A\bm x\geqslant \bm b, \, \bm x \geqslant \bm 0\}$ is non-empty. To generate a query point $\bm w\notin P$, we set $\bm w = \bm x^\ast + \bm d/2$, where $\bm d\in\R^{m\times 1}$ is another vector with its entries sampled from uniform distribution on $[0,1]$. This setup basically guarantees that $\bm w$ lies outside the feasible region, making the projection problem non-trivial.

The integer matrix $\bm U$ in \eqref{milp} involves binary variables to model the sorting operation, leading to a mixed-integer program with $O(n^2)$ complexity. As $n$ increases, the number of variables and constraints becomes computationally prohibitive, even for modern solvers like \texttt{Gurobi}. Therefore, the method is suitable only for small to moderate dimensions (e.g., $n\leqslant 30$), beyond which solving time grows rapidly and may become impractical. Therefore, we deal with the case with $n=10$ variables and $m=5$ constraints and employ \texttt{YAMLIP} \cite{L04} and solver \texttt{Gurobi} to solve (\ref{milp}). Figure \ref{obj-val compare} compares the objective values obtained from projection problems using the scaled ES-norm \citep{PU14} versus the scaled generalized ES-norm under varying levels of $\alpha$. Figure \ref{obj-val compare} reveals that the projection problem using the scaled generalized-ES norm yields slightly higher objective values compared to those based on the scaled ES norm, both of which show monotonically increasing trends with respect to $\alpha$. Notably, the two objective values coincide when $\alpha=0.9$, since the generalized-ES norm and the ES norm both degenerate to the $n$th-order statistic.

\begin{figure}
   \centering
    \includegraphics[scale=0.7]{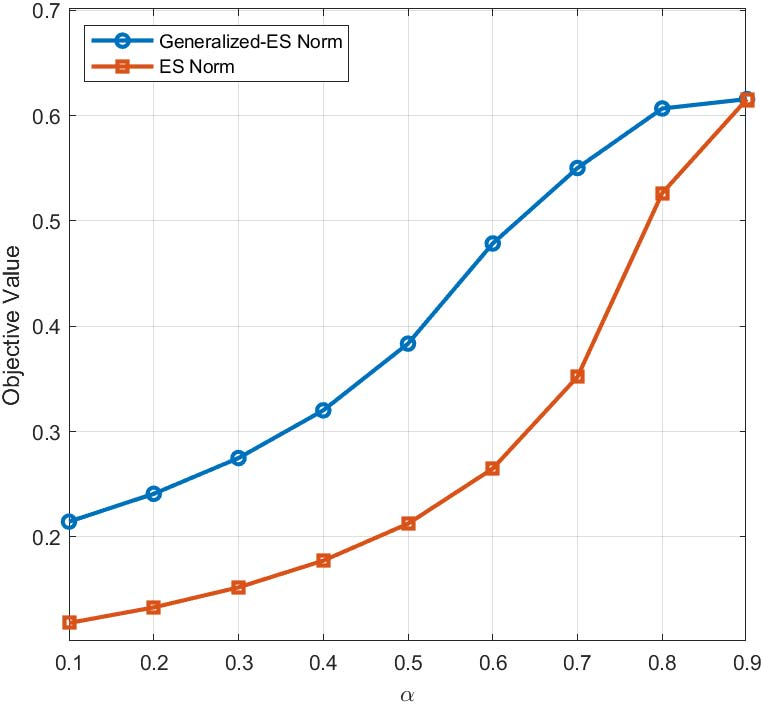}
    \caption{Objective values comparison between projection problems using the scaled ES norm and the scaled  generalized-ES norm.}
    \label{obj-val compare}
\end{figure}

\subsection{Anomaly detection}
\label{sect-5.2}

A point anomaly in time series data is defined as a point that deviates from the normal behavior of data. Detecting anomalies in financial time series is a crucial task for risk management and the identification of extreme market behavior. Financial data, especially returns from assets, often exhibit heavy-tailed and non-Gaussian characteristics. This is particularly true for cryptocurrencies like Bitcoin, which are known for their high volatility and frequent large deviations.

To detect point anomalies, traditional techniques such as the modified $Z$-score based on the median absolute deviation (MAD), the peaks over threshold (POT) method in extreme value theory (EVT), and machine learning methods like isolation forest have been widely used for this purpose. For comprehensive state-of-the-art on anomaly detection techniques in time series data, see \cite{BCML21} and \cite{SAL21}. However, these approaches may be overly sensitive to distributional assumptions or lack interpretability when applied to data with extreme tail behavior. In this section, we propose a novel point anomaly detection approach based on the scaled generalized-ES norm, which does not depend on explicit distributional assumptions or static thresholds and thus is adaptable to evolving market dynamics.

We adopt a rolling-window framework over the time series of interest. For each time point $t$, we exact a fix-sized window of recent observations $\left\{r_{t-d+1},\ldots, r_t\right\}$, where $d$ denotes the window size. The first $d-1$ values in the window are used to compute the scaled generalized-ES norm as the reference threshold to identify whether $r_t$ is a point anomaly. Specifically, for each $t$, we calculate the scaled generalized-ES norm
\begin{equation*}
   I_t= \langle\!\langle \mathbf{r}_{t-d+1:t-1} \rangle\!\rangle_\alpha^{S,g},
\end{equation*}
where $\mathbf{r}_{t-d+1:t-1}=(r_{t-d+1}, \ldots, r_{t-1})$. If the absolute value of $r_t$ exceeds $I_t$, then $r_t$ is regarded as a point outlier. This procedure is repeated across all time points using a moving window, yielding a binary indicator series that marks the presence or absence of outliers.

Since the scaled generalized-ES norm in this framework flexibly incorporates a distortion function $g$ and a tunable tail-sensitivity parameter $\alpha$,  it allows for selective emphasis on extreme deviations, thereby enabling tailored detection based on the characteristics of the data or the preference of the analyst. As a result, our approach is particularly well-suited for high-frequency, heavy-tailed financial data, and offers greater interpretability compared to machine learning-based methods such as isolation forest.

We apply this framework to daily return data of Bitcoin over the period from 2020 to 2024. First, we examine the impact of choosing a quadratic distortion function $g_1(x)=x^2$ versus no distortion (i.e. $g_2(x)=x$) under three levels of $\alpha=0.9, 0.95$ and 0.99 respectively in order to address the importance of generalizing the ES-norm with convex distortion. A rolling window of size 30 is adopted, corresponding approximately to one month of returns. As shown in figure \ref{g-alpha}, introducing a convex distortion allows for more effective identification of truly extreme return events, while the scaled ES-norm tends to mark a broader range of observations, including some that may not reflect genuine anomalies. Furthermore, as $\alpha$ approaches 1, the distinction between the scaled generalized-ES norm and the scaled ES-norm diminishes as both settings increasingly concentrate on the most extreme observations in the tail.
\begin{figure}[htbp]
  \centering
 \subfigure[\ ]{ \includegraphics[scale=0.4]{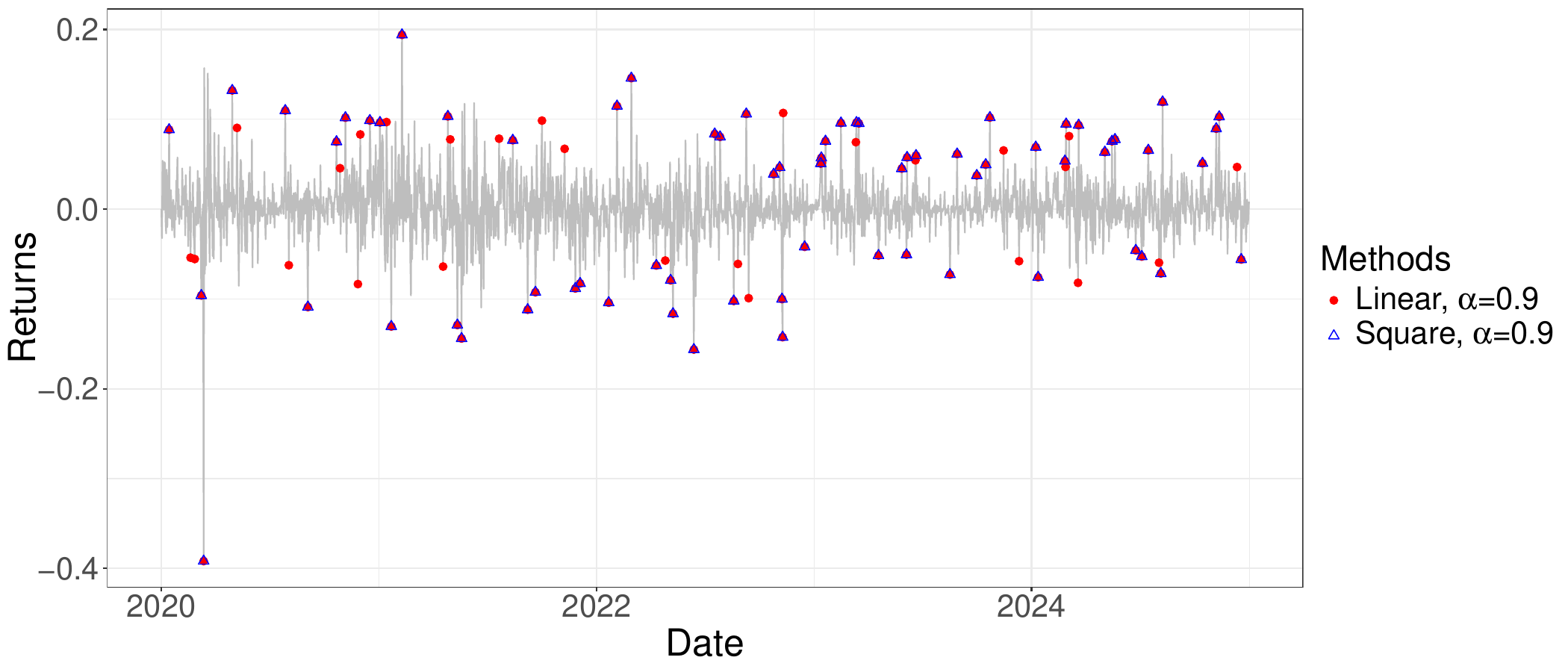} }
  \subfigure[\ ]{ \includegraphics[scale=0.4]{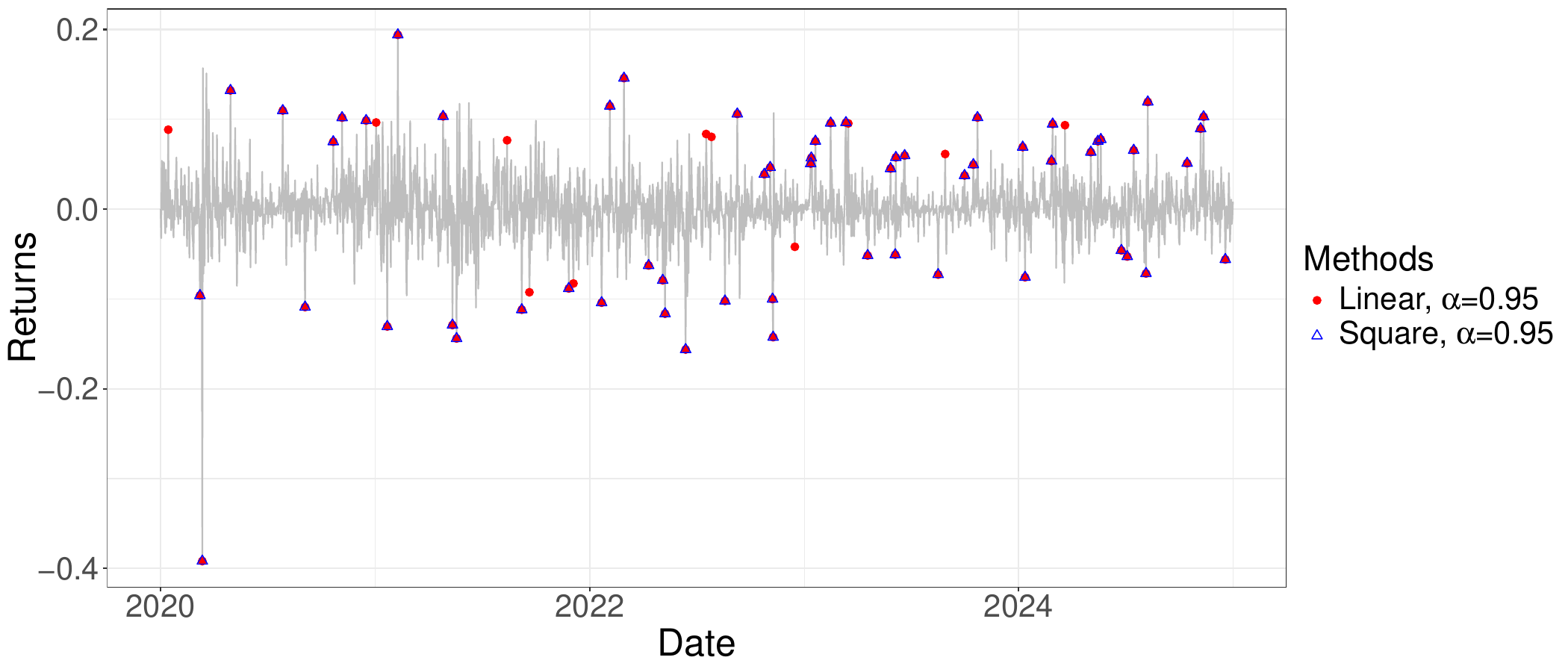} }
  \subfigure[\ ]{ \includegraphics[scale=0.4]{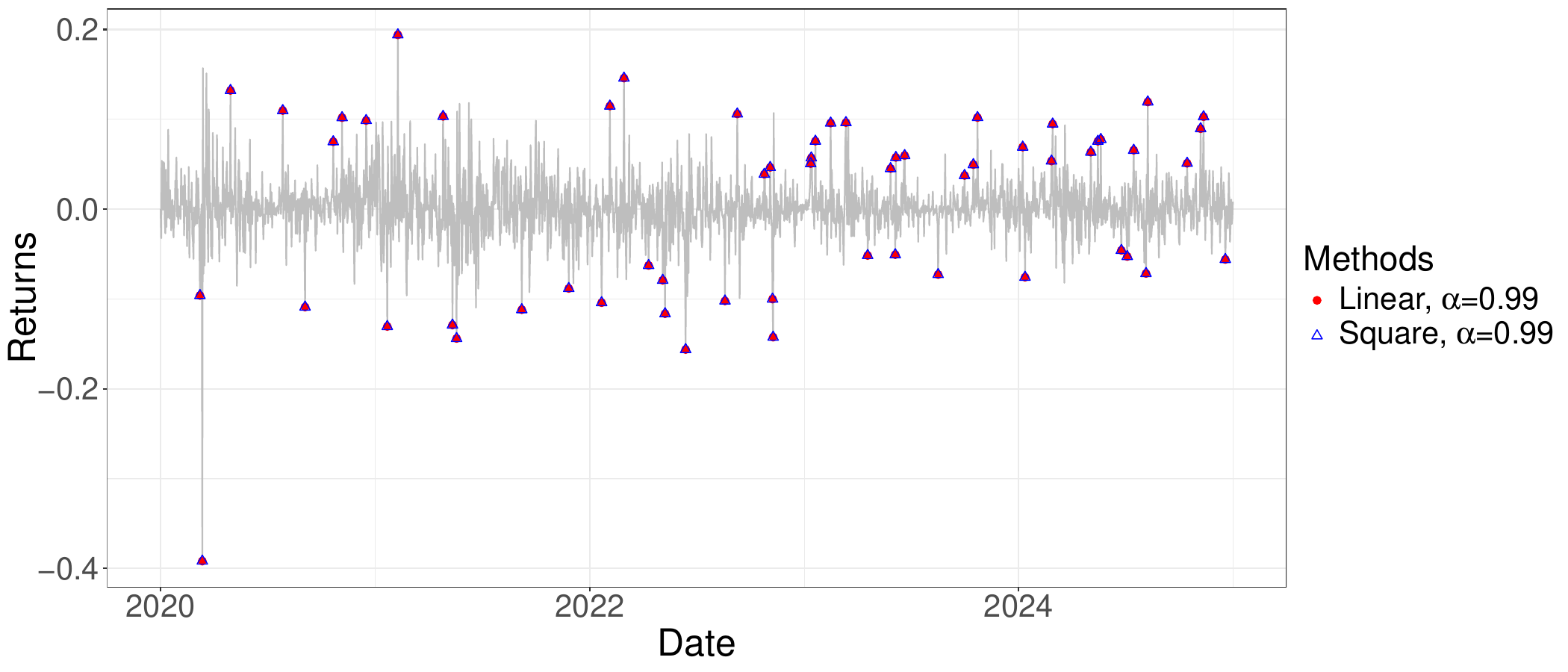} }
  \caption{Outlier detection results for $g_1(x)=x^2$ and $g_2(x)=x$ under three different levels $\alpha=0.9, 0.95, 0.99$.}
   \label{g-alpha}
\end{figure}

Next, we compare our approach with other traditional outlier detection approaches that are suitable for heavy-tailed financial time series data.
\begin{itemize}
  \item Introduced by \cite{IH93} and discussed in \cite{LLKBL13}, the modified $Z$-score based on MAD is calculated as
    \begin{equation*}
        \text{Modified $Z$-score} = \frac{0.6745\times (r_t-\text{median}(\mathbf{r}_{t-d+1:t-1}))}{\text{MAD}(\mathbf{r}_{t-d+1:t-1})},
    \end{equation*}
    where $\text{MAD}(\mathbf{r}_{t-d+1:t-1})=\text{median}(|\mathbf{r}_{t-d+1:t-1}-\text{median}(\mathbf{r}_{t-d+1:t-1})|)$, and the scaling factor 0.6745 ensures that the modified $Z$-score is consistent with the standard $Z$-score under a Gaussian distribution. It marks observations exceeding a certain threshold as point outliers. For heavy-tailed Bitcoin returns, we set the threshold of the absolute modified $Z$-score to 3. Any observation exceeding this threshold is flagged as a point outlier. The window width is set to be 30, same as our norm-based approach.

 \item The Peaks Over Threshold (POT) method, a key approach in Extreme Value Theory (EVT), models the tail behavior of financial returns to identify rare event (see \citealp{CBTD01,SM12}). The POT method uses a generalized Pareto distribution (GPD) to model exceedances above a high threshold $u$:
    \begin{equation*}
        G_{\xi,\beta}(x)=1 - \left(1+\xi\frac{x-u}{\beta}\right)^{-1/\xi}, \quad x>u,
    \end{equation*}
    where $\xi$ and $\beta$ are shape and scale parameters, respectively. Observations beyond some quantile threshold $v$ of data in the window are classified as outliers. The POD method requires an adequately large observation window to ensure sufficient exceedance date, as sparse exceedances can result in high parameter estimation variance and consequently lead to instability in GPD fitting. In our study, we adopt a fixed window width of 180 observations, and consider outliers both in the right tail and the left tail in the window to examine both positive and negative extremes. To deal with the right tail, we set the threshold $u$ at the 90th percentile of the windowed data to ensure sufficient tail data for parameter estimation. To examine extreme outliers, we set the quantile threshold $v$ as the 99th percentile, effectively detecting the most deviant points. The outliers in left tail are handled analogously.

 \item Isolation Forest is an efficient unsupervised learning algorithm specifically designed for anomaly detection in high-dimensional datasets, as proposed in  \cite{LTZ08}. The core idea is based on the observation that anomalies are data points that are ``few and different" and therefore they tend to be more easily separable from the rest of the data. Isolation Forest isolates anomalies explicitly by recursively partitioning the data space. The algorithm works by constructing an ensemble of binary trees, known as isolation trees (iTrees). Each tree is built by randomly selecting a feature and then randomly choosing a split value within the range of that feature. This process is repeated recursively until either the point is isolated (i.e., the partition contains only one sample) or a maximum tree height is reached.

    For each observation $x$, its path length $h(x)$ is defined as the number of edges from the root node to the leaf node where $x$ is isolated. Anomalies are expected to have shorter average path lengths since they can be isolated early due to their rarity and feature-space distinctiveness.
    The anomaly score for a point $x$ is computed as
    \begin{equation*}
        s(x, n)=2^{-\E [h(x)]/c(n)},
    \end{equation*}
    where $\E [h(x)]$ is the average path length across all trees in the ensemble, $n$ is the subsample size used to build each tree and $c(n)$ is a normalization factor representing the average path length of unsuccessful searches in a binary search tree, which corrects for the bias introduced by different sample sizes.
    The expression for $c(n)$ is approximated by
    \begin{equation*}
        c(n)=2H(n-1)-\frac{2(n-1)}{n},
    \end{equation*}
    where $H(i)$ is the $i$-th harmonic number $H(i)=ln(i)+\gamma$, and $\gamma\approx0.5772$ is the Euler–Mascheroni constant. This approximation holds for $n>2$. The resulting anomaly score $s(n)$ lies in the interval $(0,1)$, with values closer to 1 indicating a higher likelihood of being an anomaly. Intuitively, if a point can be quickly isolated with a small $h(x)$, it is considered more anomalous.

    In our implementation, we apply Isolation Forest using a moving window of width 180, meaning that at each step, the algorithm is applied to the most recent 180 observations. We compute the anomaly score $s(x)$ for each point in the window and label points with scores exceeding 0.7 as anomalies. This threshold is selected to balance sensitivity and false detection rates, and can be tuned depending on the tolerance to false positives.
\end{itemize}

The results in Figure \ref{anomaly comparison} demonstrate that our norm-based method successfully captures most common outliers identified by other approaches, confirming its effectiveness. Table \ref{tab-no.outliers} presents a cross-method comparison of outlier detection results, where GES-square and GES-linear are generalized-ES norm-based methods with distortion functions $g_1$ and $g_2$, respectively. The diagonal cells show each method's total outlier counts, while off-diagonal cells indicate shared detections between method pairs. Parenthetical percentages reflect the overlap rate relative to each column method's total (e.g., 84.06\% means 84.06\% of that outliers detected by GES-linear were also detected by GES-square). Compared to alternative methods, our approach offers several key advantages. It provides greater flexibility than the modified $Z$-score based on MAD through tunable distortion function $g$ and level $\alpha$, allowing for customized detection. Moreover, the proposed method requires significantly smaller window sizes compared to both EVT and isolation forest approaches. This computational advantage proves particularly valuable for Bitcoin time series analysis, where frequent regime shifts necessitate adaptive modeling frameworks. Notably, the EVT method demonstrates a relatively high false positive rate even with a $99$th percentile quantile threshold $(v=0.99)$, suggesting that the current window width of $180$ observations may be insufficient for robust parameter estimation. Unlike black-box machine learning methods (e.g., isolation forest), our approach derives directly from risk measure theory, offering mathematically tractable interpretations that are essential for financial risk management.

\begin{figure}
    \centering
    \includegraphics[scale=0.45]{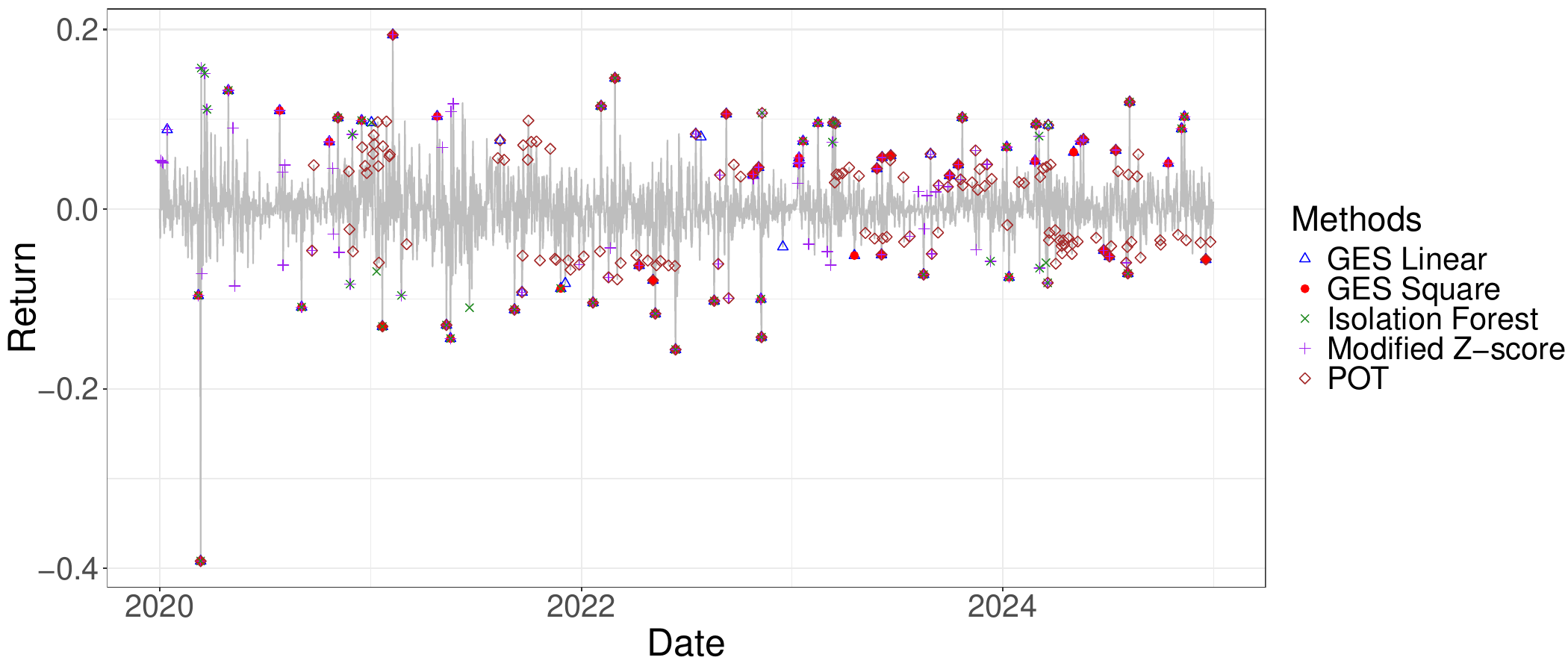}
    \caption{Outliers detected by different methods}
    \label{anomaly comparison}
\end{figure}

\begin{table}[htbp]
  \centering
  \caption{The diagonal entries show the total number of outliers detected by each method. The off-diagonal entries indicate the counts of common outliers detected by each pair of methods. The percentages in parentheses represent the proportion of these common outliers relative to the total outliers detected by the column method.}
  \label{tab-no.outliers}  \medskip
  \begin{tabular}{c cc cc c}
    \toprule
          & \multicolumn{1}{c}{GES-linear} & \multicolumn{1}{c}{GES-square} & \multicolumn{1}{c}{Isolation Forest} & \multicolumn{1}{c}{Modified $Z$-score} & \multicolumn{1}{c}{POT} \\
    \midrule
    GES-linear           & \textbf{69}           & 58 (100.00\%) & 34 (69.39\%) & 56 (50.91\%) & 41 (26.28\%) \\
    GES-square           & 58 (84.06\%) & \textbf{58}           & 31 (63.27\%) & 51 (46.36\%) & 35 (22.44\%) \\
    Isolation Forest     & 34 (49.28\%) & 31 (53.44\%) & \textbf{49}      & 42 (38.18\%) & 22 (14.10\%) \\
    Modified Z-score     & 56 (81.16\%) & 51 (87.93\%) & 42 (85.71\%) & \textbf{110}     & 51 (32.69\%) \\
    POT                  & 41 (59.42\%) & 35 (60.34\%) & 22 (44.90\%) & 51 (46.36\%) & \textbf{156}  \\
    \bottomrule
  \end{tabular}

\end{table}

To further emphasize the effectiveness of our outlier detection method, we again employ this framework to the Chicago board options exchange volatility index (abbreviated as VIX). The VIX, often termed the $``$fear gauge", is a real-time market index representing the stock markets expectations of 30-day forward-looking volatility, derived from S$\&$P 500 index options. As a benchmark for investor sentiment and market risk, the VIX tends to spike during periods of financial stress, making it a critical tool for assessing systemic instability; see \cite{Wha00}.

We set the distortion function to be $g_1(x)=x^2$ and let $\alpha=0.95$. As figure \ref{anomalies in VIX} shows, we identified several anomalies in the daily highs of the VIX between 2018 and 2024, demonstrating that our method is capable of identifying known episodes of market disruption, often ahead of or in alignment with major macro-financial events. The key findings are as follows:
\begin{itemize}
    \item February 2018 flash crash: While the U.S. equity market experienced a rapid sell-off on February 5th-8th, our method flagged anomalous VIX behavior as early as February 2nd, suggesting mounting pressure in the options market prior to the event.

    \item March 2020 COVID-19 meltdown: The U.S. stock market collapses multiple times in March 2020 due to panic over the rapid global spread of COVID-19. Although VIX peaked during the week of March 9th, our anomaly detection indicates that VIX began showing sustained abnormal spikes starting from February 24th, 2020.

    \item Tapering and geopolitical risks in 2021-2022: Between late 2021 and March 2022, our method detected five intermittent anomalous VIX spikes. This period corresponds to the Federal Reserves tightening signals as well as the escalation of Russia-Ukraine conflict, which added significant geopolitical risk to market sentiment.

    \item August 2024 volatility surge: On August 5th, 2024, the VIX approached levels not seen since the COVID crash and recorded the largest single-day intraday swing in its history. Our method identified August 2nd, 5th, and 6th as anomalous, once again suggesting the models capacity to detect extreme volatility with both accuracy and timeliness.
\end{itemize}

The anomalies consistently lead or coincide with independently verified crises, corroborating their validity. Early warnings (e.g., February 2018 and February 2020) imply that options markets may price in latent risks before their realization in cash equities hypothesis supported by the VIXs derivative-based construction. Moreover, the 2021-2022 and 2024 detections demonstrate the methods robustness across diverse shock types (monetary, geopolitical, and liquidity-driven). By identifying volatility outliers with temporal precision, the method demonstrates significant promise for real-time financial surveillance and systemic risk monitoring.

\begin{figure}
    \centering
    \includegraphics[scale=0.4]{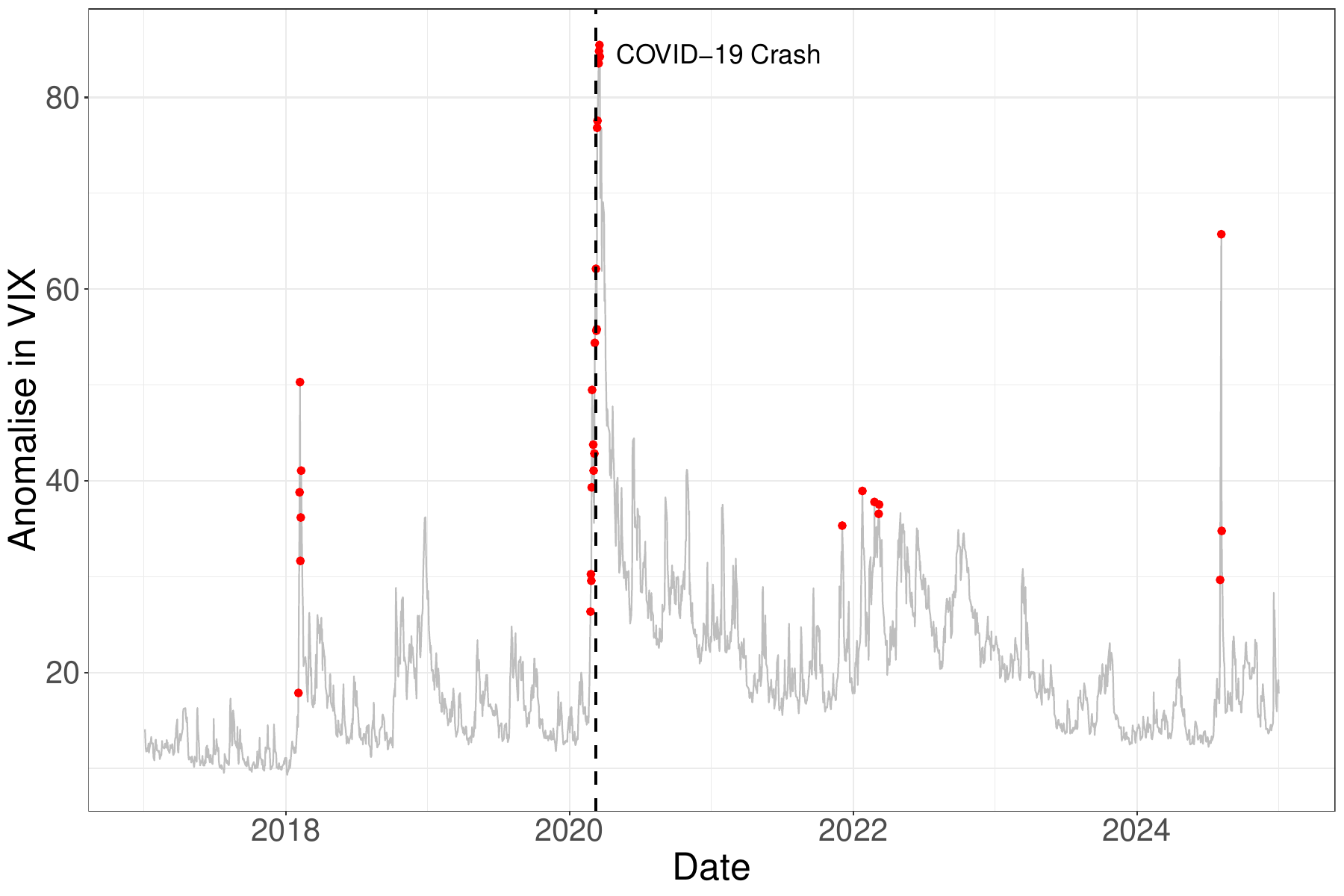}
    \caption{Outliers in VIX from 2018 to 2024}
    \label{anomalies in VIX}
\end{figure}

\section*{Funding}

T. Hu would like to acknowledge financial support from National Natural Science Foundation of China (No. 72332007, 12371476). Z. Zou is supported by National Natural Science Foundation of China (No. 12401625), the China Postdoctoral Science Foundation (No. 2024M753074), and the Fundamental Research Funds for the Central Universities, China (WK2040000108).

\section*{Disclosure statement}

No potential conflict of interest was reported by the authors.

%------------------------------------------------------References-----------------------------------------------------------------


\begin{thebibliography}{10}

\bibitem[\protect\citeauthoryear{Bertsimas et al.}{2004}]{BPS04}
   Bertsimas, D., Pachamanova, D. and Sim, M. (2004). Robust linear optimization under general norms. \emph{Operations  Research Letters}, \textbf{32}(6), 510-516.

\bibitem[\protect\citeauthoryear{Bhatia}{2013}]{B13}
   Bhatia, R. (2013). \emph{Matrix Analysis}. Volume 169. Springer Science \& Business Media.

\bibitem[\protect\citeauthoryear{Bl\'{a}zquez-Garc\'{i}a et al.}{2021}]{BCML21}
   Bl\'{a}zquez-Garc\'{i}a, A., Conde, A., Mori, U. and Lozano, J.A. (2021). A review on outlier/anomaly detection in time series data. \emph{ACM Computing Surveys}, \textbf{54}(3), Article 56, 33 pages.

\bibitem[\protect\citeauthoryear{Coles et al.}{2001}]{CBTD01}
   Coles, S., Bawa, J., Trenner, L. and Dorazio, P. (2001). \emph{An Introduction to Statistical Modeling of Extreme Values}. Springer, London.

\bibitem[\protect\citeauthoryear{Deza and Deza}{2016}]{DD16}
   Deza, M.M. and Deza, E. (2016). \emph{Encyclopedia of Distances}, Fourth Edition. Springer.

\bibitem[\protect\citeauthoryear{Dhaene et al.}{2012}]{DKLT12}
   Dhaene, J. and Kukush, A., Linders, D. and Tang, Q. (2012). Remarks on quantiles and distortion risk measures. \emph{European Actuarial Journal}, \textbf{2}(2), 319-328.

\bibitem[\protect\citeauthoryear{Gong et al.}{2024}]{GZGH24}
   Gong, S., Zou, Z., Guan, M. and Hu, T. (2024). Generalized Expected-Shortfalls based on distortion risk measures. Submitted to {\em Insurance: Mathematics and Economics}.

\bibitem[\protect\citeauthoryear{Iglewicz and Hoaglin}{1993}]{IH93}
  Iglewicz, B. and Hoaglin, D. C. (1993). \emph{How to Detect and Handle Outliers}. ASQC/Quality Press.

\bibitem[\protect\citeauthoryear{Kremer et al.}{2020}]{KLBP20}
  Kremer, P.J., Lee, S., Bogdan, M. and Paterlini, S. (2020). Sparse portfolio selection via the sorted $\ell_1$-norm. \emph{Journal of Banking and Finance}, \textbf{110}, 105687.

\bibitem[\protect\citeauthoryear{Leys et al.}{2013}]{LLKBL13}
   Leys, C., Ley, C., Klein, O., Bernard, F. and Licata, L. (2013). Detecting outliers: Do not use standard deviation around the mean, use absolute deviation around the median, \emph{Journal of Experimental Social Psychology}, \textbf{49}(4), 764-766.

\bibitem[\protect\citeauthoryear{Li and Li}{2022}]{LL22}
   Li, Q. and Li, X. (2022). Fast projection onto the ordered weighted $\ell_1$ norm ball. \emph{Science China: Mathematics}, \textbf{65}, 869-886.

\bibitem[\protect\citeauthoryear{Liu et al.}{2008}]{LTZ08}
   Liu, F. T., Ting, K. M. and Zhou, Z. (2008). Isolation forest. \emph{2008 Eighth IEEE International Conference on Data Mining}, Pisa, Italy, pp. 413-422.

\bibitem[\protect\citeauthoryear{L{\"{o}}fberg}{2004}]{L04}
   L{\"{o}}fberg, J. (2004). YALMIP: A toolbox for modeling and optimization in MATLAB.
   \emph{2004 IEEE International Conference on Robotics and Automation}, Taipei, Taiwan, pp. 284-289.

\bibitem[\protect\citeauthoryear{Mazza-Anthony et al.}{2021}]{MMC21}
   Mazza-Anthony, C., Mazoure, B. and Coates, M. (2021). Learning gaussian graphical models with ordered weighted $\ell_1$ regularization. \emph{IEEE Transactions on Signal Processing}, \textbf{69}, 489-499.

\bibitem[\protect\citeauthoryear{Pavlikov and Uryasev}{2014}]{PU14}
   Pavlikov, K. and Uryasev, S. (2014). CVaR norm and applications in optimization. \emph{Optimization Letters}, \textbf{8}(7), 1999-2020.

\bibitem[\protect\citeauthoryear{Pichler}{2024}]{Pic24}
  Pichler, A. (2024). Connection between higher order measures of risk and stochastic dominance. \emph{Computational Management Science}, \textbf{21}, paper number: 41 (28 pages).

\bibitem[\protect\citeauthoryear{Rockafellar and Uryasev}{2002}]{RU02}
  Rockafellar, R.T. and Uryasev, S. (2002). Conditional value-at-risk for general loss distributions. \emph{Journal of Banking and Finance}, \textbf{26}(7), 1443-1471.

\bibitem[\protect\citeauthoryear{Scarrote and MacDonald}{2012}]{SM12}
   Scarrott, C. and MacDonald, A. (2012). A review of extreme value threshold estimation and uncertainty quantification. \emph{REVSTAT-Statistical journal}, \textbf{10}(1), 33-60.

\bibitem[\protect\citeauthoryear{Shaukat et al.}{2021}]{SAL21}
  Shaukat, K., Alam, T. M., Luo, S., et al. (2021). A review of time-series anomaly detection techniques: A step to future perspectives. In: Arai, K. (eds) \emph{Advances in Information and Communication}, Advances in Intelligent Systems and Computing, vol. 1363, pp. 865-877.

\bibitem[\protect\citeauthoryear{Wang et al.}{2020}]{WWW20}
  Wang, R., Wei, Y. and Willmot, G.E. (2020). Characterization, robustness, and aggregation of signed Choquet integrals. \emph{Mathematics of Operations Research}, \textbf{45}(3), 993-1015.

\bibitem[\protect\citeauthoryear{Whaley}{2000}]{Wha00}
  Whaley, R. E. (2000). The investor fear gauge. \emph{Journal of Portfolio Management}, \textbf{26}(3), 12-17.

\bibitem[\protect\citeauthoryear{Zeng and Figueiredo}{2014}]{ZF14}
   Zeng, X. and Figueiredo M. (2014). Decreasing weighted sorted $\ell_1$ regularization. \emph{IEEE Signal Processing Letters}, \textbf{21}, 1240-1244.

\bibitem[\protect\citeauthoryear{Zeng and Figueiredo}{2015}]{ZF15}
   Zeng, X. and Figueiredo M. (2015). The ordered weighted $\ell_1$ norm: Atomic formulation, projections, and algorithms. \emph{arXiv:1409.4271v5}.




\end{thebibliography}
\end{document}